\PassOptionsToPackage{numbers,sort&compress}{natbib}
\documentclass[aps,onecolumn,notitlepage,nofootinbib,superscriptaddress,floatfix,prx,10pt]{revtex4-2}

\usepackage[utf8]{inputenc}
\usepackage[english]{babel}
\newcommand{\Fstab}{F_{\mathrm{stab}}}
\usepackage{amsmath,amssymb,amsthm,mathtools}
\usepackage[letterpaper,margin=0.68in]{geometry}

\usepackage{graphicx}
\usepackage[shortlabels]{enumitem}
\usepackage{needspace}

\usepackage{booktabs}
\usepackage{tabularx}

\usepackage{float}
\makeatletter
\let\newfloat\newfloat@ltx
\makeatother
\usepackage[ruled,vlined]{algorithm2e}

\usepackage{comment}

 \usepackage{xcolor}
\usepackage{tikz}
\usetikzlibrary{arrows.meta,positioning,calc}

\usepackage[pdfpagelabels]{hyperref}
\hypersetup{
    colorlinks=true,
    linkcolor = [rgb]{0.70,0.13,0.13},
    citecolor = [rgb]{0.13,0.55,0.13},
    urlcolor  = [rgb]{0.25,0.41,0.88}
}
\usepackage[nameinlink,noabbrev]{cleveref}

\hypersetup{
    pdftitle={Adaptivity makes single-copy stabilizer learning optimal},
    pdfauthor={ABC},
    pdfsubject={Exact and tolerant single-copy learning; tolerant memoryless testing and ordinary testing with quantum memory}
}

\theoremstyle{plain}
\newtheorem{theorem}{Theorem}[section]
\newtheorem{lemma}[theorem]{Lemma}

\theoremstyle{definition}

\crefname{theorem}{Theorem}{Theorems}
\crefname{lemma}{Lemma}{Lemmas}
\crefname{corollary}{Corollary}{Corollaries}
\crefname{proposition}{Proposition}{Propositions}
\crefname{section}{Section}{Sections}
\crefname{subsection}{Section}{Sections}
\crefname{appendix}{Appendix}{Appendices}
\Crefname{appendix}{Appendix}{Appendices}

\let\doi\relax

\newcommand{\F}{\mathbb{F}_2}
\newcommand{\E}{\mathbb{E}}

\newcommand{\ket}[1]{\lvert#1\rangle}
\newcommand{\bra}[1]{\langle#1\rvert}

\newcommand{\tr}{\operatorname{tr}}

\newcommand{\FS}{F_{\mathrm{stab}}}
\newcommand{\Stab}{\mathrm{Stab}}

\newcommand{\id}{\mathbb{I}}
\newcommand{\eps}{\varepsilon}

\newcommand{\doi}[1]{%
    \href{https://doi.org/#1}{doi:#1}%
}

\usepackage[normalem]{ulem}

\SetKwInput{KwInput}{Input}
\SetKwInput{KwOutput}{Output}

\begin{document}
\raggedbottom

\title{Adaptivity is all you need:\\ Optimal stabilizer learning using just single-copy measurements}

\author{Lennart Bittel$^{\dagger}$}
\email{l.bittel@fu-berlin.de}
\affiliation{Dahlem Center for Complex Quantum Systems,
Freie Universit{\"a}t Berlin, 14195 Berlin, Germany}

\author{Jens Eisert$^{\dagger}$}
\email{jense@zedat.fu-berlin.de}
\affiliation{Dahlem Center for Complex Quantum Systems,
Freie Universit{\"a}t Berlin, 14195 Berlin, Germany}
\affiliation{Helmholtz-Zentrum Berlin f{\"u}r Materialien und Energie,
Berlin, Germany}

\author{Weiyuan Gong$^{\dagger}$}
\email{wgong@g.harvard.edu}
\affiliation{John A. Paulson School of Engineering and Applied Sciences,
Harvard University, Allston, Massachusetts 02134, USA}

\author{\\Antonio Anna Mele$^{\dagger}$}
\email{a.mele@fu-berlin.de}
\affiliation{Dahlem Center for Complex Quantum Systems,
Freie Universit{\"a}t Berlin, 14195 Berlin, Germany}

\author{Louis Schatzki$^{\dagger}$}
\email{l.schatzki@ibm.com}
\affiliation{Dahlem Center for Complex Quantum Systems,
Freie Universit{\"a}t Berlin, 14195 Berlin, Germany}


\date{\today}

\begin{abstract}
Stabilizer states are central to quantum computing, underlying quantum error correction, benchmarking, and efficient classical simulation. Yet their learnability exhibits a striking gap: an \(n\)-qubit stabilizer state can be learned from \(\Theta(n)\) copies using two-copy Bell measurements, whereas non-adaptive single-copy measurements require \(\Omega(n^2)\) copies. Here we show that adaptivity completely closes this gap. We give a polynomial-time adaptive algorithm that learns an arbitrary \(n\)-qubit stabilizer state from \(\Theta(n)\) single-copy Clifford measurements, matching the optimal sample complexity of Bell sampling without any multi-copy measurements. The same ideas yield a sample-optimal single-copy tolerant tester and, with \(k\) qubits of quantum memory, the optimal testing tradeoff \(\Theta(n-k+1/\varepsilon)\) at infidelity \(\varepsilon\). Finally, we show that this adaptive mechanism extends beyond exact stabilizer states: states of stabilizer nullity at most \(r\), including states prepared by Clifford circuits with a bounded number of \(T\) gates, can be learned using \(O(n2^{r})\) single-copy measurements.
\end{abstract}
\maketitle

\begingroup
\renewcommand{\thefootnote}{\fnsymbol{footnote}}
\footnotetext[2]{Authors are listed in alphabetical order.}
\endgroup

\vspace*{-0.7cm}
\section{Introduction}
Stabilizer states constitute one of the most fundamental and extensively studied classes of quantum states and take center stage in quantum information science. The stabilizer formalism underpins both quantum error correction and fault-tolerant quantum computation~\cite{Got97}, as well as provides efficient classical descriptions and simulation algorithms~\cite{AG04}. At the same time, stabilizer states are central objects in quantum verification, certification, and benchmarking~\cite{FL11,MGE11,HKP20,BenchmarkingReview,PRXQuantum.2.010201}. This raises 
the basic operational question of how reliably and resource-efficiently an unknown stabilizer state can be identified from experimental data. The problem has motivated substantial recent interest in learning and testing stabilizer states under different restrictions on the available measurements and quantum memory~\cite{AG08,Mon17,ABDY23,GIKL24,CLL24,AD24,BvDH25,HH25,AS26}.


The problem of \emph{learning stabilizer states} asks, given independent copies of an unknown stabilizer state, to recover a classical description of the state. Montanaro~\cite{Mon17} gave a polynomial-time algorithm achieving the optimal sample complexity $\Theta(n)$ via Bell sampling, which is intrinsically a two-copy measurement: corresponding qubits of two copies are jointly measured in the Bell basis. Repeating this procedure $O(n)$ times reveals enough information to reconstruct the stabilizer group, and so the recovering the stabilizer state. Since $\Omega(n)$ copies are necessary information-theoretically~\cite{AG08}, this achieves the optimal sample complexity. This optimal scaling, however, comes at the price of requiring coherent access to two copies of the unknown state at the same time.

It is therefore highly natural, both conceptually and experimentally, to ask what can be achieved using only single-copy measurements. Such a class of protocols processes one copy at a time and requires no coherent quantum memory between different copies. Avoiding this requirement can be significantly more convenient in practice, and single-copy and memory-limited approaches to quantum learning and testing have consequently received increasing attention~\cite{ABDY23,CLL24,HH25,AS26}. 

Known bounds suggested that restricting to single-copy measurements might incur a substantial cost for stabilizer learning. Ref.~\cite{ABDY23} showed that non-adaptive single-copy measurements require $\Omega(n^2)$ copies already for quadratic phase states, a subclass of stabilizer states. Thus, there is a quadratic separation between the $\Theta(n)$ copies sufficient with Bell sampling and the $\Omega(n^2)$ copies required by non-adaptive single-copy strategies. Crucially, however, this lower bound does not apply to adaptive protocols: it assumes that the measurements performed on different copies are chosen independently of previous measurement outcomes. 
This leaves open the following basic question:
\begin{quote}
\emph{Are adaptive single-copy measurements already sufficient for sample-optimal stabilizer learning?}
\end{quote}

In this work, we answer this question in the affirmative. We give a time-efficient algorithm that learns an arbitrary \(n\)-qubit stabilizer state using \(\Theta(n)\) single-copy adaptive measurements, matching the optimal sample complexity achievable by Bell sampling, while requiring no coherent access to multiple copies. The algorithm uses only Clifford operations followed by computational-basis measurements. Thus, classical adaptivity alone suffices for sample-optimal stabilizer learning.

The main idea behind the algorithm is simple. If all stabilizers of the state were \(Z\)-type, then the state would already be a computational-basis state and could be identified immediately. The goal is therefore to progressively convert non-diagonal stabilizers into \(Z\)-type stabilizers, without destroying those obtained in previous rounds.

To do so, in each round we measure two copies separately in the computational basis and take the bitwise difference of the two outcomes. For a stabilizer state, this difference identifies the $X$-part of a stabilizer that is not diagonal in the computational basis. We then use this information to adaptively choose a Clifford operation that turns this stabilizer into a $Z$-type stabilizer, while preserving the $Z$-type stabilizers found in previous rounds. Repeating this procedure monotonically increases the number of known $Z$-type stabilizers, until the state is transformed into a computational-basis state and is therefore completely determined. Keeping track of the Clifford operations applied along the way then recovers the original stabilizer state. 

We also consider the \emph{stabilizer testing} problem: distinguish stabilizer states from states whose
fidelity with every stabilizer state is at most $1-\varepsilon$. Combining the known $\Omega(n)$
single-copy lower bound~\cite{AS26} with the $\Omega(1/\varepsilon)$
cost of state discrimination gives a lower bound of
$\Omega(n+1/\varepsilon)$ copies. On the other hand, the best previous single-copy
upper bound for general $\varepsilon$ was
$O(n/\varepsilon)$~\cite{AS26}. Our adaptive learning procedure
yields a polynomial-time single-copy tester with optimal sample
complexity $\Theta(n+1/\varepsilon)$, therefore closing the gap. Further, our tester works even for mixed inputs.

We then extend our approach to the tolerant setting. We first show
that a slight modification of our exact stabilizer learning protocol
also gives a \emph{tolerant learner}, making the algorithm robust to
deviations from the stabilizer-state promise. More precisely, let
$\Fstab(\rho)=\max_S\langle S|\rho|S\rangle$ denote the fidelity with
a closest pure stabilizer state, and assume
$\Fstab(\rho)>\sqrt{2/3}$. With high probability, our protocol then finds a stabilizer state whose
fidelity with $\rho$ is within any desired additive error of the
optimum $\Fstab(\rho)$, even when $\rho$ is mixed.

This also allows us to perform \emph{tolerant stabilizer testing}:
given $\varepsilon_1<\varepsilon_2$, distinguish
$\Fstab(\rho)\geq1-\varepsilon_1$ from
$\Fstab(\rho)\leq1-\varepsilon_2$.
This problem has been studied in several recent
works~\cite{GIKL24,AD24,BvDH25}. For pure inputs and suitable constant
thresholds, two-copy Bell measurements suffice to test using $O(1)$
copies, independently of the number of qubits~\cite{GNW21,GIKL24}.
Whether efficient tolerant testing is possible using only single-copy
measurements, however, remained open~\cite{HH25}.
We show that our adaptive approach also solves this problem:
in the parameter regime specified below, it gives a polynomial-time
single-copy tester with optimal sample complexity
$\Theta\!\left(n+\varepsilon_2/(\varepsilon_2-\varepsilon_1)^2\right)$.

Furthermore, we study stabilizer testing when the protocol is allowed to retain $k$ qubits of coherent quantum memory between successive copies. For constant distance parameter, this model interpolates between the single-copy setting, $k=0$, where the optimal sample complexity is $\Theta(n)$~\cite{HH25,AS26}, and the regime in which an entire copy can be stored coherently, where $\Theta(1)$ copies suffice~\cite{GNW21}. Building on these results and recent work on stabilizer testing with limited quantum memory~\cite{AS26}, we determine the optimal memory--sample tradeoff. In particular, for non-tolerant stabilizer testing with distance parameter $\varepsilon$, the optimal sample complexity is $\Theta(n-k+1/\varepsilon)$, even with mixed inputs. Notably, learning and testing stabilizer states both requires $\Theta(n)$ copies until $k = n-o(n)$. That is, to achieve sub-linear testing, one must be able to store nearly the entire state.

Finally, beyond exact stabilizer states, the same mechanism remains effective when learning pure states with stabilizer nullity $r$, meaning that its unsigned Pauli stabilizer space has dimension $n-r$. 
Recall that in the stabilizer state case, every computational-basis difference is the $X$-part of a stabilizer. 
At nullity $r$, only a $2^{-O(r)}$ fraction of the differences need have this property. 
The crucial observation is that the remaining differences are nevertheless harmless: they still lie in the orthogonal complement of the diagonal stabilizer space, and therefore the same Clifford update preserves every diagonal stabilizer already present. 
Consequently, the adaptive compression is slowed down only by a factor $2^{O(r)}$.
After $O(2^rn)$ rounds of adaptivity, the transformed state is supported on an affine subspace of dimension at most $r$.
Learning this subspace up to high probability mass and performing pure-state tomography using random Clifford measurements only on the remaining logical subsystem gives $O(n2^r)$ single-copy measurements at constant accuracy and confidence.
In particular, since a $t$-doped state with $t$ non-stabilizer ($T$) gates has stabilizer nullity at most $2t$ ($t$), $t$-doped stabilizer states can be learned using $n2^{O(t)}$ copies at constant accuracy and confidence, improving over the result in Ref.~\cite{CLL24}.

Taken together, our results introduce a single-copy adaptive subroutine
for stabilizer learning and use it to close several gaps in quantum
learning and testing. More broadly, our work contributes to shedding light on when
adaptivity helps in quantum learning. Its benefits are highly
task-dependent, and provable separations from non-adaptive strategies
remain relatively rare~\cite{CGYZ25} to date. Known examples where single-copy adaptivity helps
include quantum state tomography under the infidelity
metric~\cite{Mahler13,CHLLS23} and bosonic Gaussian tomography, where
it can remove nearly all dependence on the state energy~\cite{BMEM25,CGYZ26}.
In contrast, for general mixed-state tomography in trace distance,
adaptive single-copy measurements offer no asymptotic sample-complexity
advantage over non-adaptive ones~\cite{CHLLS23}. Our results show that
stabilizer learning is a natural structured setting in which adaptivity makes a
fundamental difference. The quadratic gap between non-adaptive
single-copy learning and two-copy Bell sampling can be closed by
classical feedback alone: single-copy measurements already suffice
for sample-optimal learning, without coherent access to multiple copies.

\subsubsection{Conventions}
We write $\psi=\ket\psi\bra\psi$ when a pure state is used as an operator, and equalities between states are understood up to global phase. A Pauli label $(a,b)\in\F^{2n}$ represents $X^aZ^b$ up to phase; two Pauli labels commute when $a\cdot b'+b\cdot a'=0$. The unsigned stabilizer space consists of the labels whose Paulis have the state as an eigenvector. We use $\FS(\rho):=\max_{\ket S\in\Stab_n}\bra S\rho\ket S$ for stabilizer fidelity, where $\Stab_n$ is the set of pure $n$-qubit stabilizer states. Testing distances are infidelities $1-\FS$, whereas tomography accuracy is measured in trace distance $\frac12\|\rho-\sigma\|_1$.  

\section{Exact learning}
In this section we present a polynomial-time algorithm~\ref{alg:stab_learn} that learns
an unknown stabilizer state using only $\Theta(n)$ single-copy adaptive measurements.

The central idea of the protocol is to construct a Clifford circuit $C$ such that $C\ket{\psi} = \ket{z}$, where $\ket{\psi}$ is the unknown stabilizer state and $\ket{z}$ is a computational-basis state. Once $C$ is constructed, measuring $C\ket{\psi}$ in the computational basis reveals $z$, and then the state $\ket{\psi} = C^\dagger\ket{z}$ can be reconstructed. Unless the transformed state is already a computational-basis state, each round increases the dimension of its diagonal stabilizer subspace with probability at least $1/4$, and never decreases this dimension. More formally, a state is a computational-basis state if and only if its unsigned stabilizer space equals $\mathcal{Z} \coloneqq \{ (0,z) | z\in \F^n \}$. That is, the state is a joint eigenvector of all $Z$-type Pauli strings. Consider a stabilizer state $\psi$ with unsigned stabilizer space $L_\psi$. Define $D(\psi)= \{s \in \F^n \vert (0,s) \in L_\psi\}$, which is the subspace of unsigned $Z$-type stabilizer labels. Our learning algorithm works by randomly increasing the
size of $D$ in each step and never decreasing it. Once $D(C\psi C^\dagger) = \F^n$ we are done and can read off the computational-basis state $\ket{z}$ to complete the learning procedure.  

The protocol is given explicitly in Algorithm~\ref{alg:stab_learn}. To illustrate an example, consider the Bell state
$(\ket{00}+\ket{11})/\sqrt2$, with stabilizer generators $XX$ and
$ZZ$. Initially, there is one independent diagonal stabilizer,
$ZZ$, which requires the two bits in each computational-basis
outcome to agree. Measuring two copies separately gives
$x,y\in\{00,11\}$. If the outcomes differ, their bitwise difference
is $u=x+y=11$. A CNOT from the first qubit to the second maps
the state to $\ket+\ket0$ and the generators to $XI$ and $IZ$.
The diagonal constraint now involves only the second qubit, so
we can act on the first without disturbing it. If the algorithm
chooses $H$ rather than $HS$ as the local correction, $XI$ becomes
$ZI$ while $IZ$ remains unchanged. The state is then $\ket{00}$ (and the diagonal stabilizer dimension has increased from one to two).

\begin{theorem}[Main theorem]
\label{thm:explicit-exact}
For $n\geq1$ and $0<\delta\leq1/3$,
Algorithm~\ref{alg:stab_learn} exactly learns an unknown pure
$n$-qubit stabilizer state with probability at least $1-\delta$,
in polynomial time, using
\[
N=O\!\left(n+\log\frac1\delta\right)
\]
copies and only adaptive single-copy measurements.
This sample complexity is optimal up to universal constant factors,
even allowing collective measurements.
More precisely, the algorithm runs for $T=2n+O\!\left(\sqrt{n\log(1/\delta)}+\log(1/\delta)\right)$ rounds, uses $N=2T+1$ copies, and takes $O(nT^2)$ time.
\end{theorem}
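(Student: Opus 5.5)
The plan is to analyze a single round of Algorithm~\ref{alg:stab_learn} and then apply a concentration bound over rounds.

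\textbf{Structure of measurement outcomes.} Let $\ket\phi=C\ket\psi$ be the current transformed stabilizer state, with unsigned stabilizer space $L$ and diagonal part $D=D(\phi)$ of dimension $d$. I will use the standard fact that the computational-basis distribution of a stabilizer state is uniform over an affine subspace $x_0+V$, where $V=D^\perp$ has dimension $n-d$ and equals the projection of $L$ onto its $X$-parts. Two independent single-copy measurements $x,y$ then give $u=x+y$ uniform on $V$, and every $u\in V$ is the $X$-part of some stabilizer $(u,b)\in L$. Moreover $u\neq 0$ with probability $1-2^{-(n-d)}\ge 1/2$ whenever $d<n$.

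\textbf{Update step.} Given $u\neq0$ and a basis of $D$, I construct in polynomial time a Clifford $G$ that does three things: it fixes every $(0,s)$ with $s\in D$ as a $Z$-type Pauli, it maps the $X$-part $u$ to a single-qubit $X$-part $e_j$, and it is followed by a local $H$ or $HS$ on qubit $j$. Concretely, because $u\cdot s=0$ for all $s\in D$, a CNOT network with $u_j=1$ that sends $u\mapsto e_j$ acts on $Z$-labels by the inverse transpose. It therefore sends $D$ into the set of vectors with $j$-th coordinate $0$, so the diagonal stabilizers avoid qubit $j$. After the CNOTs, the unknown stabilizer $(u,b)$ becomes $(e_j,b')$. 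Multiplying by elements of the transformed $D$ does not help clear $b'$ on other qubits in general. Instead I apply CZ and phase gates controlled on qubit $j$, determined by the known $b'$ restricted to coordinates outside $j$. The catch is that $b'$ is unknown. My resolution is that the algorithm does not need $b'$ exactly: it suffices to randomly guess the $j$-th component, and that guess is the $H$ versus $HS$ choice. For the other coordinates, I would argue instead that after CNOTs the state factors as (a state on qubit $j$) $\otimes$ (rest) up to CZ corrections, which can be learned from the same outcomes $x$. This is the step I expect to be the main obstacle: I have to verify, using the explicit algorithm's update, that with probability at least $1/2$ over the local choice (combined with $u\neq0$, giving $\ge1/4$) the new state has $e_j$ as a diagonal stabilizer label, so $d$ increases by one, while $D$ is preserved deterministically. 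The Bell-state example in the text is the model case.

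\textbf{Counting and optimality.} Each round uses two copies and increases $d$ by one with probability at least $p\ge1/4$ independently of the past, and $d$ never decreases. The number of successes in $T$ rounds therefore stochastically dominates a Binomial$(T,p)$. A Chernoff/Bernstein bound shows that $T=n/p+O(\sqrt{n\log(1/\delta)}+\log(1/\delta))$ rounds suffice with probability $1-\delta$. With the sharper per-round rate $1/2$ that I expect holds for the algorithm's specific choice, this gives the stated $2n+\dots$. One final copy measured after $C$ yields $z$, so $N=2T+1$. The runtime comes from maintaining tableaux: each round updates an $O(n)$-gate Clifford acting on the stored circuit, giving $O(nT^2)$ by naive recomposition. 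For the lower bound, Holevo's bound applied to the $2^{\Theta(n^2)}$ stabilizer states gives $\Omega(n)$ copies even with collective measurements, as in~\cite{AG08}, and a standard Fano argument supplies the $\log(1/\delta)$ term.
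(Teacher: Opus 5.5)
The genuine gap is the update step you flag as your main obstacle. You never establish that the algorithm's correction creates a new diagonal stabilizer. The detour you propose (CZ/phase corrections determined by the unknown $b'$, and a factorization ``learned from the same outcomes $x$'') is unjustified: computational-basis outcomes are uniform on a coset of $D^\perp$ and do not determine $b'$. More importantly, it is unnecessary.

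The missing observation is what the transformed stabilizer looks like. After $F_u$, the stabilizer $(e_j,b')$ is, up to phase, $X_jZ_j^{b'_j}\prod_{k\neq j}Z_k^{b'_k}$. Its $X$-support is qubit $j$ alone, so it is already diagonal on every other qubit and nothing needs to be cleared. On qubit $j$ it acts as $X$ or $Y$, and exactly one of $H_j$, $H_jS_j$ maps that to $\pm Z$. This turns the whole stabilizer into a $Z$-type one $(0,b'')$ with $b''_j=1$.

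Your own inverse-transpose argument finishes the step. Take $M$ to be the action of $F_u$ on $X$-labels, so $Mu=e_j$. Then every transformed old label $M^{-T}s$, $s\in D$, has $j$-th coordinate $0$. These labels are untouched by the local gate, and $b''$ lies outside their span, so $d$ increases by one.

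Note also that the statement you planned to verify, that $e_j$ itself becomes a diagonal label, is false in general. Take the two-qubit graph state stabilized by $X\otimes Z$ and $Z\otimes X$, with $u=10$. Then $F_u=\id$, and $H_1$ produces the Bell state, whose new diagonal label is $11$, not $10$.

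The counting also needs repair to obtain the stated round bound. Conditioned on the history, the success probability of a round is $p_r=(1-2^{-r})/2$ with $r=n-d$. This equals $1/4$ at $r=1$, so there is no uniform per-round rate $1/2$. A binomial domination with a uniform rate only gives $T=4n+O(\sqrt{n\log(1/\delta)}+\log(1/\delta))$. That suffices for $N=O(n+\log(1/\delta))$, but not for the leading term $2n$.

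To get $2n$, dominate the stopping time by $\sum_{r=1}^nG_r$ with independent $G_r\sim\mathrm{Geom}(p_r)$. Use $\sum_r 1/p_r=\sum_r 2/(1-2^{-r})<2n+4$, and apply an upper-tail bound for sums of independent geometric variables (e.g.\ Janson's).

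Finally, the $\Omega(\log(1/\delta))$ part of the lower bound comes from a two-point Helstrom argument, not from Fano. For example, $\ket{0}$ versus $\ket{+}$ have $N$-copy squared overlap $2^{-N}$, which forces error probability at least $2^{-N}/4$. Your Holevo argument for $\Omega(n)$ is fine.
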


\begin{proof}
At the beginning of a round, let $\ket{\phi}=C\ket{\psi}$ denote the current transformed state. Let $L_\phi$ be its unsigned stabilizer space and define $D(\phi):=\{s\in\F^n:(0,s)\in L_\phi\}$, the subspace of $Z$-type stabilizer labels. Let $\ell:=\dim D(\phi)=\log_2|D(\phi)|$.
The algorithm first measures two copies of $\ket{\phi}$ in the computational basis and adds the outcomes, a procedure known as computational difference sampling~\cite{HH25,GIKL25}. The resulting vector $u:=x+y$ is uniformly distributed over $D(\phi)^\perp$, the orthogonal complement of $D(\phi)$ with respect to the standard inner product on $\F^n$. Moreover, $D(\phi)^\perp$ is precisely the projection of $L_\phi$ onto its $X$-coordinates. Hence, for every sampled $u\in D(\phi)^\perp$, there exists some $v\in\F^n$ such that $(u,v)\in L_\phi$. If $u\neq0$, the algorithm attempts to update $C$ so that such a stabilizer is mapped to a new $Z$-type stabilizer, thereby increasing $\ell$ by one.

Say that $u\neq 0$. Fix an index $i$ such that $u_i=1$; for concreteness, we take $i$ to be the first such index (the pivot). Consider the CNOT circuit
\begin{equation}
    F_u := \prod_{\substack{j\neq i\\u_j=1}} \mathrm{CNOT}_{i\rightarrow j}\ .
\end{equation}
One can verify that this Clifford maps a stabilizer label $(u,v)$ to $(e_i,v')$ for some $v'\in\F^n$. Moreover, CNOT gates map $X$ strings to $X$ strings and $Z$ strings to $Z$ strings, so $\ell$ is preserved under $F_u$.

It remains to understand the action of $F_u$ on the pre-existing diagonal stabilizers. For $s\in D(\phi)$, the transformed label is
\begin{equation}
    (0,s)\overset{F_u}{\longmapsto}(0,s'),
    \qquad
    s'=s+e_i\langle u+e_i,s\rangle.
\end{equation}
Its pivot coordinate satisfies $s'_i=s_i+\langle u+e_i,s\rangle=\langle u,s\rangle=0$, where the last equality follows from $u\in D(\phi)^\perp$. Hence every pre-existing $Z$-type stabilizer acts trivially on qubit $i$ after applying $F_u$, and is therefore unaffected by any subsequent local Clifford acting only on that qubit.

The stabilizer $(e_i,v')$ acts on qubit $i$ as either $X_i$ or $Y_i$. Accordingly, one of the two local corrections $H_i$ and $H_iS_i$ maps it to a $Z$-type stabilizer. Since the algorithm chooses uniformly between these two corrections, conditioned on $u\neq0$, it increases $\ell$ by one with probability $1/2$\footnote{The increase cannot be more than one because any new $Z$-type stabilizer must come from one of the form $(e_i, w)$ and hence is equal to $(e_i,v')$, the stabilizer we are implicitly converting, plus an element of $D$.}. Moreover, as shown above, both corrections preserve all pre-existing $Z$-type stabilizers, so $\ell$ can never decrease.

We now bound the number of rounds needed to increase $\ell$ to $n$. Let $r:=n-\ell$ be the codimension of $D(C\psi C^\dagger)$. Computational difference sampling yields a non-zero $u$ with probability $\Pr[u\neq0]=1-2^{-r}$. Conditioned on $u\neq0$, a correction that increases $\ell$ is chosen with probability $1/2$. Hence, whenever the current value is $r>0$, the probability of decreasing $r$ by at least one is
\begin{align}
    p_r:=\frac{1-2^{-r}}{2}\ .
\end{align}
Let $\tau$ be the number of rounds until $r = 0$. Since an update can never increase $r$ and the process is a Markov chain, the number of rounds to go from $r$ to $r-1$ has distribution $\operatorname{Geom}(p_r)$. For each possible value of $r$ define an independent random variable $G_r\sim\operatorname{Geom}(p_r)$. To capture the worst case of starting from $r=n$, we define
\begin{align}
    T_{\mathrm{stop}}:=\sum_{r=1}^n G_r.
\end{align}
For every $t\geq0$, we have $\Pr[\tau>t]\leq\Pr[T_{\mathrm{stop}}>t]$. Moreover,
\begin{align}
    \E[T_{\mathrm{stop}}]
    &=
    \sum_{r=1}^n\frac{1}{p_r}
    =
    \sum_{r=1}^n\frac{2}{1-2^{-r}}
    <2n+4\ .
\end{align}
Applying the upper-tail bound for sums of independent geometric random variables from Ref.~\cite[Theorem~2.1]{Jan18}, we obtain
\begin{align}
    \Pr\left[
        T_{\mathrm{stop}}
        >
        2n+K\left(
            \sqrt{n\log(1/\delta)}+\log(1/\delta)
        \right)
    \right]
    \leq\delta\ ,
\end{align}
where $K$ is a universal constant. By the preceding tail comparison, with probability at least $1-\delta$ it reaches a Clifford $C$ such that $C\ket{\psi}=\ket{z}$ within
$2n+O(\sqrt{n\log(1/\delta)}+\log(1/\delta))$ rounds. Once $r=0$, every later difference is zero and the circuit remains unchanged. Explicitly, $T=\lceil 2n+4+\sqrt{8(2n+4)\log(1/\delta)}+8\log(1/\delta)\rceil$ suffice. A single additional computational-basis measurement then determines $z$.
Once we have obtained $C$ and $z$, the pair $(C,z)$ already gives an efficient classical description of the learned state as $\ket{\psi}=C^\dagger\ket{z}$ (whose stabilizer group can be computed efficiently if needed).
The matching $\Omega(n+\log(1/\delta))$ lower bound, even allowing arbitrary collective measurements, follows from Ref.~\cite{chang2026sampleoptimallearningstabilizerstates,Mon17}.
\end{proof}

\begin{algorithm}[t]
\caption{Single-copy stabilizer learning}
\label{alg:stab_learn}
\KwInput{$2T+1$ independent copies of an unknown stabilizer state $\ket{\psi}$, with $T=2n+O\!\left(\sqrt{n\log(1/\delta)}+\log(1/\delta)\right)$.}
$C \leftarrow \id$\;
\For{$t = 1$ to $T$}{
    Apply $C$ to two independent copies, one at a time, and measure them in the computational basis to obtain $x,y \in \F^n$\;
    $u \leftarrow x+y\in \F^n$\;
    If $u = 0$, do nothing and proceed to the next iteration\;
    Otherwise, $C\leftarrow F_{u}C$, where $F_{u} = \prod_{\substack{j \neq i\\ u_j = 1}}\text{CNOT}_{i\rightarrow j}$ and $i$ is the first index such that $u_i=1$\;
    Draw an independent uniform bit $b$ and set $C\leftarrow H_iS_i^bC$\;
}
Measure $C\ket{\psi}$ once in the computational basis to obtain $z\in\F^n$\;
\Return{the classical pair $(C,z)$ describing $C^\dagger\ket{z}$}
\end{algorithm}

This ends the description of the algorithm and of its performance guarantees. While we have focused on qubit systems for clarity, the same argument generalizes to systems of prime local dimension, as explained in Appendix~\ref{sec:prime-dimensional-qudits}.

\section{Tolerant learning}
\label{sec:tolerant-learning}

We now extend the learner to states that are close to a stabilizer, but
need not themselves be stabilizer states or even pure. We run the same
adaptive Clifford updates and save one stabilizer candidate in every
round. We then estimate the candidates' fidelities on fresh copies and
return the best one. The main task is to show that the list contains a
closest stabilizer with high probability. We prove this using the
expected absorption time of an auxiliary birth--death chain as a
potential, followed by a standard drift concentration theorem.

\begin{theorem}[Tolerant single-copy learning]
\label{thm:tolerant}
Let $n\geq1$, $0<\xi\leq1/2$, and $0<\delta\leq1/3$. Suppose that $\FS(\rho)\geq f_0 > \sqrt{2/3}$ for a known $f_0$, where
$\rho$ is an $n$-qubit density matrix, possibly mixed. Set
$\mu:=3f_0^2/2-1>0$. There is an adaptive single-copy algorithm using
$O(R_\mu(n,\delta)+\xi^{-2}\log(R_\mu(n,\delta)/\delta))$ copies that
outputs a stabilizer $\ket{\widehat S}$ satisfying
$\bra{\widehat S}\rho\ket{\widehat S}\geq\FS(\rho)-\xi$ with probability
at least $1-\delta$. The round budget can be taken to be
\begin{align}
R_\mu(n,\delta)
&:=2\left\lceil\frac{13.5(n+1)}{\mu}\right\rceil
+\left\lceil\frac{1458}{\mu^2}\log\frac{2}{\delta}\right\rceil
=O\!\left(\frac n\mu+\frac1{\mu^2}\log\frac1\delta\right).
\label{eq:tolerant-round-budget}
\end{align}
The algorithm uses only Clifford operations and computational-basis
measurements and runs in time $O(n^2R_\mu^2\xi^{-2}\log(R_\mu/\delta))$.
\end{theorem}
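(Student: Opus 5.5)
Following the section overview, I run the same adaptive Clifford updates as Algorithm~\ref{alg:stab_learn}, but now on copies of $\rho$. After each round $t$ I record the candidate $\ket{S_t}=C_t^\dagger\ket{z_t}$, where $z_t$ is the outcome of one extra computational-basis measurement of $C_t\rho C_t^\dagger$ (alternatively, one of the two outcomes $x,y$ already drawn in that round). After $R=R_\mu(n,\delta)$ rounds I estimate every fidelity $\bra{S_t}\rho\ket{S_t}$ to additive accuracy $\xi/2$. Each estimate uses $O(\xi^{-2}\log(R/\delta))$ fresh copies: apply $C_t$, measure in the computational basis, and record the indicator that the outcome equals $z_t$. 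I return the empirical maximizer. A union bound together with Hoeffding's inequality makes all estimates accurate simultaneously with probability $1-\delta/2$. Hence it suffices to show that, with probability $1-\delta/2$, some recorded candidate equals a closest stabilizer $\ket{S^\star}$, or at least has fidelity at least $\FS(\rho)-\xi/2$. Since the estimation can reuse a single pool of $O(\xi^{-2}\log(R/\delta))$ copies per candidate, the time bound follows by counting.

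For the main step I fix a closest stabilizer $\ket{S^\star}$ with fidelity $F\ge f_0>\sqrt{2/3}$ and track the potential $\ell_t=\dim\bigl(D(C_tS^\star C_t^\dagger)\bigr)$, that is, the number of $Z$-type stabilizers of the \emph{target} after transformation. Write $\rho=F\,S^\star+(1-F)\sigma'$ up to off-diagonal terms; more precisely, I bound probabilities of events using the fidelity. Each of the two measured copies is "good" with some probability related to $F$. When both copies behave like $S^\star$, the difference $u$ is uniform on $D^\perp$, and the exact analysis applies: $\ell$ increases with probability at least $(1-2^{-r})/2$ and never decreases. A bad outcome may produce a $u\notin D^\perp$, and the resulting CNOT/Hadamard can destroy up to one diagonal stabilizer; I will check that one bad update lowers $\ell$ by at most one. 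The threshold $\sqrt{2/3}$ and $\mu=3f_0^2/2-1$ must come from this step: the probability that $u$ is "consistent" is at least about $F^2$ (both copies good), and the net drift $\Pr[\text{up}]-\Pr[\text{down}]$ is roughly $F^2\cdot\tfrac12-(1-F^2)$, which equals $\mu$ after normalization. I therefore aim to prove $\Pr[\ell\uparrow]\ge c\,(1-2^{-r})$ and $\Pr[\ell\downarrow]\le 1-F^2$ with constants giving drift $\ge\mu/\text{const}$. The delicate part is that "good copy" is not a classical event for a mixed, non-diagonal $\rho$. I will handle this by computing $\Pr[u\in D^\perp \text{ and }\ldots]$ directly: the probability mass that $C\rho C^\dagger$ places on the coset $z^\star+D^{\perp\perp}$ support of $CS^\star C^\dagger$ is at least $\bra{S^\star}\rho\ket{S^\star}$, because that support projector dominates $CS^\star C^\dagger$. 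Both outcomes landing in the support coset then forces $u\in D^\perp$, with conditional distribution I can lower bound.

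This yields a birth--death comparison chain on $r=n-\ell\in\{0,\dots,n\}$ with downward probability $\approx(1-2^{-r})/2\cdot F^2$ and upward probability $\le 1-F^2$; the factor $(1-2^{-r})$ is weak near $r=1$. As announced, I will use the expected absorption time $h(r)$ of this auxiliary chain, which is $O((n+1)/\mu)$ and satisfies $h(r)-\E[h(r')]\ge1$, as the potential. Then $h(r_t)+t$ is a supermartingale with bounded increments (of order $1/\mu$), and a drift concentration theorem (Hajek-type or Azuma) gives absorption within $2\lceil13.5(n+1)/\mu\rceil+O(\mu^{-2}\log(1/\delta))$ rounds with probability $1-\delta/2$, matching \eqref{eq:tolerant-round-budget}. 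At absorption $CS^\star C^\dagger=\ket{z^\star}$, and the candidate's $z_t$ equals $z^\star$ with probability at least $F\ge f_0$. Since $r$ may later leave $0$, I must ensure a hit: I will record candidates at every round and argue that the chain sits at $0$ for a constant fraction of subsequent rounds, or else use the most likely outcome. I expect the main obstacle to be the one-step transition bounds for mixed $\rho$, namely proving that bad updates lower $\ell$ by at most one and obtaining exactly the constant $3F^2/2-1$.
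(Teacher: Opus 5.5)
\paragraph{Forward step.} Your skeleton matches the paper's. You run the same updates on copies of $\rho$, record one candidate per round, and track the reference quantity $r_t=n-\dim D(C_t\ket{S_\star})$. You bound the backward step by $\Pr[r\uparrow]\le 1-\tr(\rho_t\Pi_{A_t})^2\le 1-F^2$. You then use the expected absorption time of a pessimistic birth--death chain as a potential with drift $1$ and increments $O(1/\mu)$, followed by Azuma/K\"otzing-type concentration.

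The gap is the forward step, which is where $\sqrt{2/3}$, $\mu$ and the constant $13.5$ come from. Both outcomes landing in $A_t=z_t+D_t^\perp$ only gives $u\in D_t^\perp$. Progress needs $u\neq0$, and for non-stabilizer $\rho$ the distribution of $\rho_t$ inside $A_t$ need not be flat, so there is no conditional uniformity to exploit. Your target $\Pr[r\downarrow]\gtrsim(1-2^{-r})F^2/2$ is in fact false. Take $n=1$ and $\ket{\psi}=\cos\theta\ket{0}+\sin\theta\ket{1}$ with $\theta$ near $\pi/4$. The closest stabilizer is $\ket{+}$ with $F=(1+\sin2\theta)/2$, but two distinct outcomes occur with probability only $\sin^2(2\theta)/2=(2F-1)^2/2<F^2/2$.

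The missing idea combines three ingredients: the flat amplitudes $K^{-1/2}$ of the reference state on $A_t$ (with $K=2^r$), positivity $|\bra{x}\rho_t\ket{y}|\le\sqrt{p_xp_y}$, and Cauchy--Schwarz. Together they give $F\le K^{-1}\bigl(1+\sum_{x\neq y\in A_t}\sqrt{p_xp_y}\bigr)\le K^{-1}\bigl(1+\sqrt{K(K-1)c_t}\bigr)$, where $c_t$ is the probability of two distinct outcomes in $A_t$. Hence $c_t\ge(KF-1)^2/(K(K-1))$ and $\Pr[r\downarrow]\ge c_t/2\ge F^2/2-2^{-r-1}$, a bound with an \emph{additive} deficit. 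This is the form that keeps the increments $d_r=\sum_{k=r}^{n}B^{k-r}/(a_r\cdots a_k)$ under control, with $a_j=f_0^2/2-2^{-j-1}$ and $B=1-f_0^2$. Using $\prod_j(1-2^{-j}/f_0^2)^{-1}<13.5$, one gets $d_r\le 13.5/\mu$ and hence $h(n)\le 13.5(n+1)/\mu$. Without this, your claimed match with the round budget is unsupported.

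\paragraph{Estimation stage and final hit.} Two further steps need repair. First, your estimation stage does not meet the stated copy bound. Testing each candidate separately (apply $C_t$, check for $z_t$) costs $O(\xi^{-2}\log(R/\delta))$ copies \emph{per candidate}, i.e.\ $O(R\,\xi^{-2}\log(R/\delta))$ in total, because the measurement bases differ and measured copies cannot be reused. The additive bound $O(R+\xi^{-2}\log(R/\delta))$ requires estimating all candidates from one shared pool. The paper does this with global-Clifford classical shadows, which have bounded shadow norm for rank-one projectors and are combined with median-of-means; this still uses only Cliffords and computational-basis measurements.

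Second, for the final hit, showing that the chain spends a constant fraction of time at $r=0$ is neither proved nor needed. Instead, add an absorbing state $-1$ meaning that $r_t=0$ and the recorded outcome equals $z^\star$, so that $\ket{S_\star}$ has been recorded. From $r=0$ this state is entered with probability at least $F\ge f_0$; otherwise the pessimistic chain jumps to $r=1$. The extra increment satisfies $d_0=(1+(1-f_0)d_1)/f_0\le 13.5/\mu$, using $f_0>\sqrt{2/3}$ and $\mu\le 1/2$. The same potential and concentration argument then directly bounds the first time $\ket{S_\star}$ appears on the list.
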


\begin{proof}
Fix a closest stabilizer $\ket{S_\star}$ and write
$f:=\bra{S_\star}\rho\ket{S_\star}=\FS(\rho)\geq f_0$.
Starting from $C_0=I$, run $R:=R_\mu(n,\delta)$ adaptive rounds.
In round $t=0,\ldots,R-1$, measure two fresh copies of
$\rho_t:=C_t\rho C_t^\dagger$ in the computational basis, obtaining
$x_t,y_t$. Before changing $C_t$, record
$\ket{S_t}:=C_t^\dagger\ket{x_t}$; then use $x_t+y_t$ for the update
in Algorithm~\ref{alg:stab_learn}. Each round uses two copies and
produces one stabilizer candidate at no additional cost.

We will track the evolution of the reference state $\ket{S_*}$ under the learned Clifford. Let $\ket{\phi_t} := C_t\ket{S_*}$, $D_t := D(\phi_t)$, and $r_t := n-\dim D_t$. The quantities are not known to the algorithm and, unlike the case of $\rho$ being a stabilizer state, $r_t$ could either increase by $1$, decrease by $1$, or stay the same after an update to $C$. When $r_t = 0$, $\ket{\phi_t}$ is a computational basis state. Hence, measuring $\rho_t$ in the computational basis yields $\ket{\phi_t}$ with probability
\begin{align}
    \langle \phi_t \vert \rho_t \vert \phi_t\rangle = \langle S_*\vert \rho \vert S_*\rangle \geq f_0\ .
\end{align}

We are interested in the number of rounds required for both $r_t=0$ and for $\ket{\phi_t}$ to then be observed. To analyze this, we consider a random walk with state space $r\in \{-1,0,1,\ldots, n\}$, where $r_t = -1$ represents properly observing $\phi_t$ when $D_t = \F^n$. Clearly, adding this additional state can only increase the hitting time and we will continue with this state space. 

Next, we bound the transition probabilities. To do so, first recall that any stabilizer state has a uniformly computational-basis distribution on an affine subspace $A_t = z_t + D_t^\perp$, for some $z_t \in \F^n$. Since
$\Pi_{A_t}:=\sum_{x\in A_t}\ket{x}\bra{x}\geq\ket{\phi_t}\bra{\phi_t}$,
a measurement of $\rho_t$ lies in $A_t$ with probability at least $f$. If both outcomes lie there, their difference belongs to $D_t^\perp$,
so the preservation argument of Theorem~\ref{thm:explicit-exact}
applies to the reference state. Therefore, writing $B:=1-f_0^2$,
\begin{align}
\Pr[r_{t+1}=r_t+1\mid C_t]
&\leq1-\tr(\rho_t\Pi_{A_t})^2
\leq1-f^2\leq B.
\label{eq:tolerant-backward}
\end{align}
Note that this bound holds even in the case of $r_t = 0$ since the probability of not observing the correct computational basis state is at most $1-f \leq 1-f^2 \leq 1-f_0^2$. 

To bound the probability of $r$ decreasing, suppose $r_t=r\geq1$, set
$K:=|A_t|=2^r$, and write $p_x:=\bra{x}\rho_t\ket{x}$ and
$c_t:=\sum_{x,y\in A_t,\,x\neq y}p_xp_y$.
Here $c_t$ is the probability of two distinct outcomes in $A_t$.
The flat reference amplitudes, positivity, and Cauchy--Schwarz give
\begin{align}
f
&\leq\frac1K\left(
\sum_{x\in A_t}p_x+
\sum_{\substack{x,y\in A_t\\x\neq y}}\sqrt{p_xp_y}
\right)
\leq\frac{1+\sqrt{K(K-1)c_t}}{K}.
\label{eq:tolerant-distinct-outcomes}
\end{align}
Since $Kf>1$, we obtain $c_t\geq(Kf-1)^2/[K(K-1)]$.
Two distinct outcomes in $A_t$ yield a nonzero reference stabilizer
$X$-label. The independent random local correction then creates a new
diagonal stabilizer with probability at least $1/2$. Consequently, with $A:=f_0^2/2$,
\begin{align}
\Pr[r_{t+1}=r_t-1\mid C_t]
&\geq\frac{c_t}{2}
\geq\frac{(Kf-1)^2}{2K(K-1)}
=\frac{f^2}{2}-\frac1{2K}+\frac{(1-f)^2}{2(K-1)}
\geq a_r:=A-2^{-r-1}.
\label{eq:tolerant-forward}
\end{align}
If $r_t = 0$, the downwards transition probability is $f \geq f_0 =: a_0$.

For the sake of analysis, consider the auxiliary time-homogeneous
birth--death chain $(X_t)_{t\in\mathbb N}$ with state space
$\{-1,0,1,\ldots,n\}$, where $-1$ is absorbing. For $1\leq r<n$,
its downward transition probability is $a_r$, its upward transition
probability is $B$, and the remaining probability is assigned to
staying at $r$. At $r=n$ there is no upward transition. From $r=0$,
the chain moves to $-1$ with probability $f_0$ and to $1$ with
probability $1-f_0$.
Let
\begin{align}
    T:=\inf\{t\in\mathbb N_+:X_t=-1\}
\end{align}
be its hitting time, and we will assume that $X_0 = n$ unless otherwise stated. Let $\tau$ denote the actual
number of rounds until $\ket{S_\star}$ is first recorded. The comparison
chain is chosen pessimistically, so its hitting time stochastically
dominates the actual hitting time:
\begin{align}
    \Pr[T\geq k]\geq\Pr[\tau\geq k]
\end{align}
for every $k\in\mathbb N_+$. We therefore continue by analyzing $T$
using tools from drift theory.

 We would like to use a tail bound on $T$ assuming that the drift, $\E[X_t - X_{t+1}\vert X_t] = a_r-B$, is positive. However, the drift is $a_r-B = \mu - 2^{-r-1}$, which may be negative for $r$ close enough to $0$. To ameliorate this, we will instead consider the mapping~\cite{he2004study}
 \begin{align}
     g(x) = \E[T\vert X_0 = x], & \quad \E[g(X_t)-g(X_{t+1})
\mid g(X_0),\ldots,g(X_t),T>t]=1.\ .
 \end{align}
 This mapping takes the random process $(X_t)_{t \in \mathbb{N}_+}$ to a new process $(Y_t := g(X_t))_{t\in \mathbb{N}_+}$ such that the drift is always exactly $1$ (until the chain hits the absorbing state). The new state space is some subset of $\mathbb{R}_+$, $Y_t =0$ if and only if $X_t = -1$, and the hitting time for this chain to reach $Y = 0$ is exactly $T$, the hitting time of the auxiliary chain.

 To apply a drift concentration bound, it remains to bound the change in $Y_t$ per round. That is, we want a bound $\vert Y_t - Y_{t+1}\vert \leq \gamma$ for all times. Since $\vert X_t-X_{t+1}\vert\leq1$, it suffices to bound
$d_r:=g(r)-g(r-1)$ for all $0 < r \leq n$. These expected hitting times satisfy
 \begin{align}
     g(r) = 1+a_r g(r-1) + Bg(r+1) + (1-a_r-B)g(r)\ .
 \end{align}
 This then yields the recurrence relations
\begin{align}
d_r & = \frac{1}{a_r}+\frac{B}{a_r}d_{r+1} \quad(1\leq r<n),
& a_nd_n&=1,
& f_0d_0-(1-f_0)d_1&=1.
\label{eq:tolerant-H-recurrence}
\end{align}
This recurrence relation has solutions, for $1\leq r \leq n$,
\begin{align}
d_r
&=\sum_{k=r}^n\frac{B^{k-r}}{a_ra_{r+1}\cdots a_k}=\frac1A\sum_{k=r}^n
\left(\frac BA\right)^{k-r}
\prod_{j=r}^k
\left(1-\frac{2^{-j}}{f_0^2}\right)^{-1}.
\label{eq:tolerant-H-increments}
\end{align}

The product term can be uniformly bounded: for $j=1$ the factor is at most $4$, while for $j\geq 2$ we have that $\ln (1-2^{-j}/f_0^2)^{-1} \leq (8/5) 2^{-j}/f_0^2$. Hence, the full product can be bounded by
\begin{align}
    \prod_{j=r}^k\left(1-\frac{2^{-j}}{f_0^2}\right)^{-1} & \leq 4 \exp \left( \frac{8}{5}\sum_{j=\max{2,r}}^k \frac{2^{-j}}{f_0^2} \right)
     \leq 4 \exp \left( \frac{24}{5}\sum_{j=2}^\infty 2^{-j-1} \right) = 4e^{6/5} < 13.5\ .
\end{align}
Hence,
\begin{align}
d_r
&\leq
\frac{13.5}{A}
\sum_{m=0}^\infty
\left(\frac BA\right)^m
=
\frac{13.5}{A(1-B/A)}
=
\frac{13.5}{A-B}
=
\frac{13.5}{\mu},
\qquad 1\leq r\leq n.
\end{align}
which follows from $B/A < 1$. Then, $d_0 = f_0^{-1}+(1-f_0)/f_0 d_1$. Since, $\mu \leq 1/2$ and $f_0 \geq \sqrt{2/3}$, we again have that $d_0 \leq 13.5/\mu$. Consequently,
\begin{align}
\E[T\vert X_0 = n] = g(n)
=
\sum_{r=0}^n d_r
\leq
\frac{13.5(n+1)}{\mu}.
\label{eq:tolerant-g-bound}
\end{align}

We have now shown that $(Y_t)_{t\in\mathbb N_+}$ has drift exactly $1$ toward zero and increments bounded by $13.5/\mu$. By Kötzing's concentration theorem~\cite[Theorem 2]{Kotzing16},
\begin{align}
    \Pr[T \geq R] \leq \exp\left( \frac{-R\mu^2}{1458} \right), \quad R \geq 2g(n)\ .
\end{align}
Since Eq.~\eqref{eq:tolerant-g-bound} gives $g(n)\leq13.5(n+1)/\mu$, it is enough to take $R\geq2\lceil13.5(n+1)/\mu\rceil$. Therefore $R=R_\mu(n,\delta)$ implies $\Pr[T\geq R]\leq\delta/2$. By stochastic domination, the candidate list contains $\ket{S_\star}$ with probability at least $1-\delta/2$.

Finally, use fresh copies to estimate all $R$ candidate fidelities simultaneously to additive error $\xi/2$, with failure probability
at most $\delta/2$. Global-Clifford classical shadows have bounded
variance for rank-one projectors, so median-of-means estimation uses
$M=O(\xi^{-2}\log(R/\delta))$ copies~\cite{HKP20}.
Writing $\widehat f_j$ for the estimates, return the candidate with
largest estimated fidelity. If the list contains $\ket{S_\star}$
and all estimates are accurate, then
\begin{align}
\bra{\widehat S}\rho\ket{\widehat S}
&\geq\widehat f_{\widehat S}-\frac\xi2
\geq\widehat f_{S_\star}-\frac\xi2
\geq\FS(\rho)-\xi.
\end{align}
A union bound gives the claimed success probability, and the total copy count is $2R+M$.

The accumulated circuits have $O(nR)$ gates, and constructing all
candidate tableaux costs at most $O(n^2R^2)$ operations.
Each candidate--shadow overlap is computable in $O(n^3)$
operations~\cite{AG04}. Thus the total running time is at most
$O(n^2R^2+n^3RM)$, which implies the stated bound because $R\geq n$.
\end{proof}

\section{Tolerant testing without quantum memory}

We now use the tolerant learner to test whether the input state is close
to any stabilizer state. Recall that the testing problem has two
promises. In the YES case, there exists a stabilizer state whose
fidelity with $\rho$ is at least $1-\varepsilon_1$. In the NO case,
every stabilizer state has fidelity at most $1-\varepsilon_2$. 

The tester has three steps. First, we run the candidate-generation stage
of the tolerant learner, which produces a list of stabilizer states that
might be close to $\rho$. On a YES instance, this list contains a
closest stabilizer with high probability. Second, we use global-Clifford
classical shadows~\cite{HKP20} to obtain a coarse estimate of the
fidelity of every state in the list and discard those whose fidelity is
clearly too small. A simple geometric property of stabilizer states
implies that at most six stabilizers can remain after this filtering
step. Finally, we test these remaining stabilizers directly on fresh
copies of $\rho$. Since only constantly many states remain, this final
validation step has the optimal dependence on the gap
$\varepsilon_2-\varepsilon_1$.

\begin{theorem}[Tolerant single-copy testing]
\label{thm:tolerant-testing}
Fix a known $f_0\in(\sqrt{2/3},1]$, let
$0\leq\varepsilon_1\leq1-f_0$ and
$\varepsilon_1<\varepsilon_2<1$, and let $0<\delta\leq1/3$.
There is a polynomial-time adaptive single-copy tester that distinguishes
\begin{equation}
\label{eq:tolerant-promises}
\Fstab(\rho)\geq1-\varepsilon_1
\qquad\text{from}\qquad
\Fstab(\rho)\leq1-\varepsilon_2
\end{equation}
with probability at least $1-\delta$, using
\[
O\!\left(
n+
\frac{\varepsilon_2}{(\varepsilon_2-\varepsilon_1)^2}
\log\frac1\delta
\right)
\]
copies. The input state may be mixed.

For constant $\delta$ and $0<\varepsilon_2\leq1/5$, this scaling is
optimal up to constant factors depending only on $f_0$, even if the
lower bound is restricted to pure inputs; see
Lemma~\ref{lem:testing-witness}. In particular, ordinary single-copy
stabilizer testing has sample complexity
$\Theta(n+1/\varepsilon)$ for $0<\varepsilon\leq1/5$ at constant
confidence.
\end{theorem}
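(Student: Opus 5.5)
We follow the three-stage plan sketched before the statement: generate candidates, filter them coarsely, then validate a constant number of survivors. The lower bound is left to Lemma~\ref{lem:testing-witness}.

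\emph{Stage one: candidate generation.} Run the candidate-generation part of the learner of Theorem~\ref{thm:tolerant} with the given $f_0$, $\mu=3f_0^2/2-1$, and failure probability $\delta/4$. This uses $R=R_\mu(n,\delta/4)=O(n)$ rounds, with constants depending only on $f_0$, and outputs candidates $\ket{S_0},\dots,\ket{S_{R-1}}$. In a YES instance, $\Fstab(\rho)\geq1-\varepsilon_1\geq f_0$, so the proof of Theorem~\ref{thm:tolerant} shows that the list contains a closest stabilizer $\ket{S_\star}$ with probability at least $1-\delta/4$. In a NO instance we need nothing from this stage: every candidate is a stabilizer state and so has fidelity at most $1-\varepsilon_2$.

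\emph{Stage two: coarse filtering.} Using global-Clifford classical shadows with median-of-means, estimate every candidate fidelity to a fixed constant accuracy $\eta$, with failure probability $\delta/4$. This costs $O(\log(R/\delta))=O(\log n+\log(1/\delta))$ copies, which is absorbed into $O(n+\log(1/\delta))$. Keep only the candidates whose estimate is at least $f_0-\eta$. In a YES instance $\ket{S_\star}$ survives.

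The key structural claim is that at most six distinct survivors remain. Every survivor has true fidelity at least $f_0-2\eta>\sqrt{2/3}-2\eta$. Distinct stabilizer states satisfy $|\langle S|S'\rangle|^2\leq1/2$. Consider the Gram matrix of $m$ stabilizer states, all with fidelity at least $\beta$ with $\rho$. Bounding $\sum_i\bra{S_i}\rho\ket{S_i}\leq\lambda_{\max}(\sum_i\ket{S_i}\bra{S_i})\leq1+(m-1)/\sqrt2$ gives $m\beta\leq1+(m-1)/\sqrt2$. With $\beta\approx0.81$ this yields $m\leq6$ once $\eta$ is small enough. I expect this packing bound to be the delicate step. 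If the crude $\lambda_{\max}$ estimate does not give $6$, I would refine it using the structure of overlaps, which are $0$ or $2^{-k/2}$. The exact count is unimportant: any constant suffices for the sample bound.

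\emph{Stage three: validation.} For each survivor $\ket{S}$, measure fresh copies in the basis $C_S$ that maps $\ket{S}$ to $\ket{0^n}$. Each measurement is a Bernoulli trial with mean $\bra{S}\rho\ket{S}$. We must decide, for each $S$, whether this mean is at least $1-\varepsilon_1$ or at most $1-\varepsilon_2$. This is a Bernoulli test of failure rates $\varepsilon_1$ versus $\varepsilon_2$. By a Chernoff (Bernstein) bound it needs $O(\varepsilon_2(\varepsilon_2-\varepsilon_1)^{-2}\log(1/\delta))$ samples at failure probability $\delta/24$ per survivor, using threshold $(\varepsilon_1+\varepsilon_2)/2$ on the empirical failure rate. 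The variance is at most $\varepsilon_2$ near the threshold, which gives the $\varepsilon_2$ numerator. Accept if and only if some survivor passes.

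\emph{Correctness.} In a YES instance, $\ket{S_\star}$ survives and passes. In a NO instance, every survivor has mean at most $1-\varepsilon_2$ and is rejected. A union bound over the stages gives failure probability at most $\delta$. Mixed inputs are allowed throughout because Theorem~\ref{thm:tolerant} and the shadow estimates permit them. The running time is polynomial by Theorem~\ref{thm:tolerant} plus $O(n^3)$ tableau operations per estimate.
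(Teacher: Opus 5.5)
Your proposal is correct and follows the paper's proof essentially step for step. The stages match: candidate generation from the tolerant learner, then a constant-accuracy global-Clifford shadow filter, then your Gram-matrix bound $m\beta\leq 1+(m-1)/\sqrt2$, and finally Bernstein validation of each survivor at threshold $(\varepsilon_1+\varepsilon_2)/2$, reducing to the boundary rates $\varepsilon_1,\varepsilon_2$ by monotonicity of binomial tails and accepting if any survivor passes. The Gram-matrix step is not delicate: it already gives $m\leq 6$ at the paper's threshold $\beta=3/4$.

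Two small points to tighten, both handled in the paper:
\begin{enumerate}
\item The tester should explicitly reject if more than six distinct candidates survive. Otherwise, off the shadow-accuracy event, up to $R$ survivors could be validated, and the copy count would not be a worst-case bound.
\item $R_\mu(n,\delta/4)$ is $O(n+\log(1/\delta))$, not $O(n)$. The extra $\log(1/\delta)$ term is absorbed because $\varepsilon_2/(\varepsilon_2-\varepsilon_1)^2\geq1$.
\end{enumerate}
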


\begin{proof}
Let $\Delta:=\varepsilon_2-\varepsilon_1$. We first construct a short
list of stabilizer states that could have large fidelity with $\rho$.
Run only the candidate-generation stage from the proof of
Theorem~\ref{thm:tolerant} for $R:=R_\mu(n,2\delta/3)$ rounds and
remove duplicate states. On a YES instance, except with probability
at most $\delta/3$, the resulting list contains a closest stabilizer
$\ket{S_\star}$, with
$\bra{S_\star}\rho\ket{S_\star}=\Fstab(\rho)\geq1-\varepsilon_1$.
Since $f_0$ is fixed, $R=O(n+\log(1/\delta))$.

Since one stabilizer is recorded in each round, the list contains at
most $R$ distinct states. Testing all of them separately would be too
costly, so we first use classical shadows to identify the few states
that can have large fidelity with $\rho$. For each stabilizer
$\ket{S_j}$ in the list, let $f_j:=\bra{S_j}\rho\ket{S_j}$. Using
fresh global-Clifford classical shadows~\cite{HKP20}, obtain estimates
$\widehat f_j$ such that $|\widehat f_j-f_j|\leq1/32$ simultaneously
for every state in the list, except with probability at most
$\delta/3$. The same shadow samples are used for all estimates. Since
there are at most $R$ rank-one projectors and the desired accuracy is
constant, this requires only $O(\log(R/\delta))$ additional
copies~\cite{HKP20}.

Keep only those stabilizers for which $\widehat f_j\geq25/32$. On the
simultaneous accuracy event, every surviving state satisfies
$f_j\geq25/32-1/32=3/4$. Conversely, on a YES instance a closest
stabilizer $\ket{S_\star}$ survives, since
$f_\star=\Fstab(\rho)\geq1-\varepsilon_1\geq f_0$ and hence
$\widehat f_\star\geq f_0-1/32>25/32$.

There can be at most six distinct surviving stabilizers. Indeed,
distinct stabilizer states satisfy
$|\langle S_i|S_j\rangle|\leq1/\sqrt2$~\cite{GarciaMarkovCross14}.
For $J$ surviving states, let
$A:=\sum_{j=1}^J\ket{S_j}\bra{S_j}$ and
$G_{ij}:=\langle S_i|S_j\rangle$. The nonzero eigenvalues of $A$ and
the Gram matrix $G$ coincide. Hence, by Gershgorin's theorem,
\[
\|A\|_\infty
=\lambda_{\max}(G)
\leq\max_i\sum_{j=1}^J|G_{ij}|
\leq1+\frac{J-1}{\sqrt2}.
\]
Since every surviving state has fidelity at least $3/4$, we have
$3J/4\leq\tr(\rho A)\leq\|A\|_\infty\leq1+(J-1)/\sqrt2$. Therefore $J\leq6$.

If no state or more than six states survive this filtering step, we
reject. Otherwise, let $\ket{S_1},\ldots,\ket{S_J}$, with
$1\leq J\leq6$, denote the surviving stabilizer states. Thus, on a
YES instance, except with probability at most $2\delta/3$, this short
list contains a closest stabilizer.

We now test these remaining stabilizers directly. Write
$\ket{S_j}=C_j^\dagger\ket{z_j}$. On a fresh copy of $\rho$, apply
$C_j$, measure in the computational basis, and call the trial a
rejection unless the outcome is $z_j$. Its rejection probability is
$q_j:=1-\bra{S_j}\rho\ket{S_j}$.

The two promises now take a particularly simple form. On a YES
instance, some remaining stabilizer satisfies
$q_j\leq\varepsilon_1$. On a NO instance, every stabilizer state, and
hence every state in the list, satisfies $q_j\geq\varepsilon_2$.

For each remaining stabilizer, repeat this measurement independently
$M:=\left\lceil10\varepsilon_2\Delta^{-2}\log(36/\delta)\right\rceil$
times, and let $\widehat q_j$ denote the empirical rejection rate. We
accept if $\widehat q_j\leq(\varepsilon_1+\varepsilon_2)/2$ for at
least one $j$.

An incorrect validation decision requires a deviation of at least
$\Delta/2$. By monotonicity of Bernoulli tails, it is enough to
consider the boundary values $q_j=\varepsilon_1$ and
$q_j=\varepsilon_2$. At either boundary the Bernoulli variance is at
most $\varepsilon_2$, so Bernstein's inequality gives
\begin{align}
\Pr[\text{validation error for }S_j]
&\leq
2\exp\!\left[
-\frac{M(\Delta/2)^2}
{2(\varepsilon_2+\Delta/6)}
\right] \leq
2\exp\!\left[
-\frac{M\Delta^2}{10\varepsilon_2}
\right]
\leq
\frac{\delta}{18},
\end{align}
where we used $\Delta\leq\varepsilon_2$. Since $J\leq6$, a union
bound makes the total validation error at most $\delta/3$.

On a YES instance, candidate generation, the classical-shadow
estimation, and final validation therefore fail with total probability
at most $\delta$. On a NO instance, the algorithm either rejects
earlier or reaches the final stage with at most six stabilizer states,
all satisfying $q_j\geq\varepsilon_2$; hence its probability of
accepting is at most $\delta/3$. Thus the tester succeeds with
probability at least $1-\delta$ in both cases.

It remains to count copies. Candidate generation uses
$O(n+\log(1/\delta))$ copies. The classical-shadow step uses
$O(\log(R/\delta))=O(\log n+\log(1/\delta))$, which is absorbed by
$O(n+\log(1/\delta))$. Since at most six stabilizers are tested in the
final stage, validation uses
$O(\varepsilon_2\Delta^{-2}\log(1/\delta))$ copies. Finally,
$\Delta\leq\varepsilon_2<1$ implies
$\varepsilon_2/\Delta^2\geq1$, so the standalone
$\log(1/\delta)$ term is absorbed as well. Therefore the total number
of copies is $O\!\left(
n+
\frac{\varepsilon_2}{(\varepsilon_2-\varepsilon_1)^2}
\log\frac1\delta
\right),$ as claimed. All stages run in polynomial time.
\end{proof}

To prove optimality, we combine the known linear lower bound with a simple one-qubit family capturing the dependence on the tolerant gap.

\begin{lemma}[Testing lower bound]
\label{lem:testing-witness}
Let $0\leq\varepsilon_1<\varepsilon_2\leq1/5$. Any adaptive single-copy tester that distinguishes
$\Fstab(\rho)\geq1-\varepsilon_1$ from
$\Fstab(\rho)\leq1-\varepsilon_2$
with any fixed success probability greater than $1/2$ requires
\begin{equation}
\Omega\!\left(
n+\frac{\varepsilon_2}{(\varepsilon_2-\varepsilon_1)^2}
\right)
\end{equation}
copies in the worst case, even when the input is promised to be pure.
\end{lemma}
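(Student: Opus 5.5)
The bound has two terms, and I will prove each one separately; together they give the sum up to a factor of two. The $\Omega(n)$ term comes from the known single-copy lower bound for ordinary (non-tolerant) stabilizer testing with pure inputs~\cite{HH25,AS26}. Any tester for the tolerant promise problem with $\varepsilon_1<\varepsilon_2\leq1/5$ also solves the problem of distinguishing stabilizer states ($\Fstab=1\geq1-\varepsilon_1$) from pure states with $\Fstab\leq1-\varepsilon_2$. It therefore suffices to check that the hard NO-instances in those works have stabilizer fidelity bounded away from $1$ by a constant. Random Haar states have exponentially small stabilizer fidelity. If the cited construction is only stated at some other constant distance, I will adjust it by tensoring or mixing in a one-qubit rotation. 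This will keep the NO-instances pure with $\Fstab\leq 4/5$. I regard this step as bookkeeping and will simply cite it.

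For the $\varepsilon_2/\Delta^2$ term, with $\Delta:=\varepsilon_2-\varepsilon_1$, I will use a one-qubit family, padded with $\ket{0}^{\otimes(n-1)}$ so that $\Fstab$ is unchanged. Let $\ket{\theta}:=\cos\theta\ket0+\sin\theta\ket1$. For small $\theta$ its closest stabilizer is $\ket0$, so $\Fstab(\ket\theta)=\cos^2\theta$. The same holds for the padded state, because one-qubit stabilizer fidelity is multiplicative under tensoring with a stabilizer state; I will argue this directly, or cite it as a known fact for product states with a stabilizer factor. I then pick $\theta_1,\theta_2$ with $\sin^2\theta_i=\varepsilon_i$; both are below $\pi/8$, since $\varepsilon_i\leq1/5<\sin^2(\pi/8)\approx0.146$\,--- which fails. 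I will instead verify directly that for $\sin^2\theta\leq1/5$ the $\ket0$ overlap $1-\varepsilon$ beats the $\ket\pm$ and $\ket{\pm i}$ overlaps $(1\pm\sin2\theta)/2$. Indeed, $\sin 2\theta\leq 2\sqrt{\varepsilon(1-\varepsilon)}=0.8$ and $0.9\le 0.8$ fails, so I must double check. The condition $1-\varepsilon\geq(1+2\sqrt{\varepsilon(1-\varepsilon)})/2$ holds iff $\varepsilon\leq(2-\sqrt2)/4\approx0.146$. So the $1/5$ threshold needs a different witness. I will use a state in the $X$--$Z$--$Y$ octant direction, or simply exploit the freedom to scale constants: if $\varepsilon_2\leq 1/5$ but above $0.146$, the gap term $\varepsilon_2/\Delta^2$ is at most a constant times $1/\Delta^2$. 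In that regime I will choose $\theta$ on the axis toward a face centre, so that fidelity varies linearly in a small perturbation. This regime is the fiddly part.

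Given $\Fstab(\ket{\theta_i})=1-\varepsilon_i$ (YES for $i=1$, NO for $i=2$), any single-copy adaptive tester is in particular a collective protocol on $N$ copies. Its success probability is therefore bounded by Helstrom in terms of $\|\theta_1^{\otimes N}-\theta_2^{\otimes N}\|_1$, which is small unless $N(1-|\langle\theta_1|\theta_2\rangle|^2)=\Omega(1)$. Here $1-|\langle\theta_1|\theta_2\rangle|^2=\sin^2(\theta_2-\theta_1)\approx(\theta_2-\theta_1)^2$. Using $\sin^2\theta=\varepsilon$ gives $\theta_2-\theta_1\approx\Delta/(2\sqrt{\varepsilon_2})$ up to constants, because $d\theta/d\varepsilon=1/\sin2\theta\asymp1/\sqrt\varepsilon$ and $\varepsilon\in[\varepsilon_1,\varepsilon_2]$. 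Concavity of $\sqrt{\cdot}$ gives $\sqrt{\varepsilon_2}-\sqrt{\varepsilon_1}\leq\Delta/\sqrt{\varepsilon_2}$. Hence $N=\Omega(\varepsilon_2/\Delta^2)$, which also covers the $\varepsilon_1=0$ case, giving $1/\varepsilon_2$. The main obstacle is ensuring $\Fstab$ is exactly as claimed, in particular that the nearest stabilizer is $\ket0$ throughout the range $\varepsilon\leq1/5$. If the $\sin^2\theta$ parametrization fails near $1/5$, I will use the magic direction $(1,1,1)/\sqrt3$ on the Bloch sphere. There, $\Fstab=(1+r/\sqrt3)/2$ for Bloch length $r$, so I can use pure states slightly tilted toward that direction.
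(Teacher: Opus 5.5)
Your overall architecture matches the paper's. The $\Omega(n)$ term is simply cited: the paper invokes Ref.~\cite{AS26}, Theorem~1.1, at infidelity $1/5$. The $\varepsilon_2/\Delta^2$ term comes from a one-qubit pure family padded by $\ket{0}^{\otimes(n-1)}$, plus the Helstrom bound on $N$-copy states.

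The genuine gap is the witness family. As you found, $\cos\theta\ket{0}+\sin\theta\ket{1}$ has Bloch vector $(2\sqrt{\varepsilon(1-\varepsilon)},0,1-2\varepsilon)$. So $\ket{0}$ is the nearest stabilizer only for $\varepsilon\leq(2-\sqrt2)/4\approx0.146$; at $\varepsilon=1/5$ the stabilizer fidelity is $0.9$, attained by $\ket{+}$. Hence for $\varepsilon_2\in((2-\sqrt2)/4,1/5]$ your argument currently establishes nothing, and none of the proposed repairs is carried out.

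\begin{itemize}
\item Rescaling constants cannot help. The thresholds $\varepsilon_1,\varepsilon_2$ are fixed by the promise, so you need pure witnesses at exactly those infidelities.
\item The one repair you make concrete, states along $(1,1,1)/\sqrt3$ with $\Fstab=(1+r/\sqrt3)/2$, fails on two counts. For $r<1$ these states are mixed, which violates the pure-input promise. Moreover, every state on that axis has infidelity $(1-r/\sqrt3)/2\geq(3-\sqrt3)/6\approx0.211>1/5$, so the family never reaches the range you need.
\end{itemize}

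The missing idea is just a relative phase. Tilt $\ket{0}$ along the great circle through the face centre, i.e.\ use $\ket{\eta_\varepsilon}=(\sqrt{1-\varepsilon}\ket{0}+e^{i\pi/4}\sqrt{\varepsilon}\ket{1})\otimes\ket{0}^{\otimes(n-1)}$, as the paper does.

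\begin{itemize}
\item Its Bloch vector is $(\sqrt{2\varepsilon(1-\varepsilon)},\sqrt{2\varepsilon(1-\varepsilon)},1-2\varepsilon)$. The $z$-component dominates iff $6\varepsilon^2-6\varepsilon+1\geq0$, i.e.\ for all $\varepsilon\leq(3-\sqrt3)/6$, which covers $1/5$.
\item For the padding, project the last $n-1$ qubits of any stabilizer state onto $\ket{0}^{\otimes(n-1)}$. This gives zero or a subnormalized one-qubit stabilizer state, so $\Fstab(\eta_\varepsilon)=1-\varepsilon$ exactly.
\item The phase is common to both members of a pair, so the overlaps coincide with those of your real family:
\[
1-|\langle\eta_{\varepsilon_1}|\eta_{\varepsilon_2}\rangle|^2=\sin^2(\theta_2-\theta_1)=\frac{\Delta^2}{\bigl(\sqrt{\varepsilon_2(1-\varepsilon_1)}+\sqrt{\varepsilon_1(1-\varepsilon_2)}\bigr)^2}\leq\frac{5}{4}\,\frac{\Delta^2}{\varepsilon_2}.
\]
\end{itemize}

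With this family, your Helstrom step goes through unchanged on the whole range $\varepsilon_2\leq1/5$.
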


\begin{proof}
For $0\leq\varepsilon\leq1/5$, define
$\ket{\eta_\varepsilon}:=
(\sqrt{1-\varepsilon}\ket{0}
+e^{i\pi/4}\sqrt{\varepsilon}\ket{1})
\otimes\ket{0}^{\otimes(n-1)}$.
The Bloch vector of the first qubit is
$(\sqrt{2\varepsilon(1-\varepsilon)},
\sqrt{2\varepsilon(1-\varepsilon)},1-2\varepsilon)$.
For $\varepsilon\leq1/5$, its largest component is the $Z$ component, so its closest one-qubit stabilizer is $\ket{0}$, with fidelity $1-\varepsilon$. Moreover, no entangled stabilizer can have larger overlap: postselecting the last $n-1$ qubits of any $n$-qubit stabilizer state onto $\ket{0}^{\otimes(n-1)}$ gives either zero or an unnormalized one-qubit stabilizer state. Hence
$\Fstab(\eta_\varepsilon)=1-\varepsilon$.

Set $\Delta:=\varepsilon_2-\varepsilon_1$. A direct calculation gives
\begin{equation}
\label{eq:testing-witness-overlap}
1-\left|
\langle\eta_{\varepsilon_1}\mid\eta_{\varepsilon_2}\rangle
\right|^2
=
\frac{\Delta^2}{
\bigl(
\sqrt{\varepsilon_2(1-\varepsilon_1)}
+\sqrt{\varepsilon_1(1-\varepsilon_2)}
\bigr)^2
}
=
\Theta\!\left(\frac{\Delta^2}{\varepsilon_2}\right),
\end{equation}
where the denominator lies between $(4/5)\varepsilon_2$ and $4\varepsilon_2$.
Thus the two states have squared overlap
$1-\Theta(\Delta^2/\varepsilon_2)$. By the Helstrom bound~\cite{Watrous18}, distinguishing their $N$-copy states with any fixed success probability greater than $1/2$ requires
$N=\Omega(\varepsilon_2/\Delta^2)$, even allowing arbitrary collective measurements.

Independently, Ref.~\cite[Theorem~1.1]{AS26} gives an $\Omega(n)$ lower bound for adaptive single-copy stabilizer testing with pure inputs at constant infidelity, in particular at infidelity $1/5$. This lower bound applies to the present promise problem because exact stabilizer states satisfy the YES promise, while states with stabilizer fidelity at most $4/5$ satisfy the NO promise whenever $\varepsilon_2\leq1/5$.
Combining the two lower bounds gives the claimed lower bound.
\end{proof}
\section{Testing with quantum memory}

We now consider memory-assisted testing. Here, the tester has access to $0\leq k \leq n$ coherent memory qubits. The case of $k=0$ corresponds to single-copy measurements and $k=n$ to two-copy measurements. Previously, an $O((m+1)/\varepsilon)$ tester was known~\cite{AS26}, where $m=n-k$, and our protocol substantially improves the $\varepsilon$ dependence:

\begin{theorem}[Memory-assisted optimal tester]\label{thm:memory}
Let $n\geq1$, $0\leq k\leq n$, and $0<\varepsilon\leq1/5$. Given independent copies of an unknown $n$-qubit state $\rho$, which may be mixed, there is a polynomial-time tester with $k$ qubits of persistent quantum memory that distinguishes $\FS(\rho)=1$ from $\FS(\rho)\leq1-\varepsilon$ with probability at least $5/6$, using $O(m+1/\varepsilon)$ copies, where $m:=n-k$. Every such tester requires $\Omega(m+1/\varepsilon)$ copies, so this scaling is optimal.
\end{theorem}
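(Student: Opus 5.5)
The proof has two halves: an upper bound of $O(m+1/\varepsilon)$ and a matching lower bound. The guiding idea is to use the $k$ memory qubits to absorb $k$ dimensions of the adaptive single-copy procedure, so that only the remaining $m$ dimensions have to be handled one copy at a time.

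\textbf{Upper bound.} The tester runs in two stages.

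In the first stage we run the adaptive learner of Algorithm~\ref{alg:stab_learn}, but only until the diagonal stabilizer space reaches dimension $\ell\geq k$. Concretely, we stop once the learned Clifford $C$ maps the reference stabilizer state so that it is $\ket{z'}$ on $n-k$ qubits, tensored with an arbitrary stabilizer state on $k$ qubits. Two points need care here:
\begin{itemize}
\item[(i)] On a YES instance the state is exactly a stabilizer state. By the geometric-tail argument of Theorem~\ref{thm:explicit-exact}, restricted to the first $m$ decrements of the codimension (where $p_r\geq 1/4$), reaching $\ell\geq n-k$ takes $O(m)$ rounds with constant probability.
\item[(ii)] We choose pivots so that the diagonal constraints become single-qubit $Z_i$'s on $m$ designated qubits, using Gaussian elimination via CNOTs on the diagonal part. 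With this choice the residual state lives coherently on $k$ qubits, and the diagonal qubits should read a fixed string $z'$.
\end{itemize}

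In the second stage we test the residual $k$-qubit state. On each fresh copy we apply $C$ and measure the $m$ diagonal qubits in the computational basis. We reject if the outcome differs from the $z'$ learned in stage one, which we verify with $O(1)$ extra copies. Otherwise the remaining $k$ qubits are a postselected state, and we load them into memory. We then apply a two-copy Bell-difference stabilizer test in the style of Gross--Nezami--Walter~\cite{GNW21} on pairs of such $k$-qubit states: one is held in memory while the next copy arrives, which uses exactly $k$ qubits of memory. Each GNW round on a state at infidelity $\varepsilon$ rejects with probability $\Omega(\varepsilon)$, so $O(1/\varepsilon)$ rounds suffice.

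\textbf{Main obstacle: soundness.} The hard step is showing that the NO case is caught. If $\Fstab(\rho)\leq 1-\varepsilon$, then either the diagonal measurement deviates from $z'$ with probability $\Omega(\varepsilon)$, or the postselected $k$-qubit state must be $\Omega(\varepsilon)$-far from every stabilizer state. The natural route is a fidelity decomposition:
\[
\bra{z'}\otimes\bra{S}\, C\rho C^\dagger\, \ket{z'}\otimes\ket{S} = p_{z'}\,\bra S\rho_{z'}\ket S,
\]
so that $1-\varepsilon\geq p_{z'}\,\Fstab(\rho_{z'})$. This forces either $1-p_{z'}\geq\varepsilon/2$ or $1-\Fstab(\rho_{z'})\geq\varepsilon/2$. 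Two further issues must be handled:
\begin{itemize}
\item Mixed inputs: GNW acceptance relates to stabilizer fidelity via the characteristic-distribution bound, and for mixed states this should be replaced by a purity-aware version, for instance a SWAP test folded into the same two-copy measurement.
\item Stage one may behave arbitrarily on NO instances. This is harmless, because stage two is sound for any Clifford $C$.
\end{itemize}
Note that the $O(m)$ bound for stage one only needs to hold on YES instances. For completeness we also need the reference string $z'$ to be read off correctly; on YES instances it is deterministic.

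\textbf{Lower bound.} The $\Omega(1/\varepsilon)$ term follows from Lemma~\ref{lem:testing-witness}-type discrimination (the Helstrom bound between a stabilizer state and an $\varepsilon$-perturbation), which holds even for collective measurements. For the $\Omega(m)$ term we invoke the memory-bounded lower bound of Ref.~\cite{AS26}, which gives $\Omega(n-k)$ for testing with $k$ memory qubits at constant infidelity. Taking the maximum of the two gives $\Omega(m+1/\varepsilon)$.
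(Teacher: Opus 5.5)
Your lower bound and the soundness decomposition $p_{z'}\,\Fstab(\rho_{z'})\leq\Fstab(\rho)$ match the paper. Stage one, however, has a genuine gap: step (ii) cannot be carried out. Algorithm~\ref{alg:stab_learn} never learns $D(C\psi)$. Computational difference sampling reveals only the $X$-part $u$ of a stabilizer $(u,v)$, never its $Z$-part. After $F_u$ and a successful correction, the new diagonal stabilizer is $Z_i\,Z(v'_{-i})$ with $v'$ unknown, and its support can include any of the $k$ qubits you want to keep coherent. Moreover, because the bit $b$ is random, the algorithm does not even know in which rounds $\ell$ increased. So there is no classical description of the diagonal constraints on which to perform ``Gaussian elimination via CNOTs.'' Choosing pivots among $m$ designated qubits does not help either, because the created stabilizer still carries the unknown tail $Z(v'_{-i})$.

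Reaching $\ell\geq m$ only tells you that $C\ket\psi$ is supported on some unknown affine subspace of dimension at most $k$. Identifying that subspace, which amounts to learning $D^\perp$, costs on the order of $n-\ell$ further difference samples, i.e.\ up to order $k$ copies. That breaks the $O(m)$ budget; for example, with $m=1$ and $k=n-1$ it costs $\Theta(n)$. (Also, your stopping condition is written as $\ell\geq k$ in one place and $\ell\geq n-k$ in another; the latter is what is needed.)

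The missing idea is to use the $k$ memory qubits already in the compression stage. The paper runs partial Bell difference sampling (Algorithm~\ref{alg:partial-bell_samp}), which returns the \emph{full} Pauli label $\alpha$ on the first $k$ qubits together with the $X$-part $v$ on the last $m$. It pivots on a qubit $j$ among the last $m$ and applies $F_v$. The controlled Pauli $W_{\alpha,j}$ then strips the action on the first $k$ qubits, so after $H_jS_j^b$ the new $Z$-stabilizer is supported only on the last $m$ qubits. The tracked potential is $D_m$, the set of $Z$-stabilizers acting only on the last $m$ qubits. It never shrinks, and it grows with probability at least $\tfrac12(1-|D_m|/2^m)$ by Lemma~\ref{lem:bell_samp_null}. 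Once $D_m=\F^m$, the product form $\ket\phi\otimes\ket z$ holds automatically, without the algorithm ever needing to know $D_m$. This gives Lemma~\ref{lem:unentangle} with $O(m+\log(1/\delta))$ copies.

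For stage two, your SWAP-test fix for mixed inputs is left unspecified. The paper instead uses the six-copy routine of Ref.~\cite{iyer2024tolerant}, whose acceptance probability is at most $(5+3\Fstab(\tau))/8$ even for mixed $\tau$. Hence one block, including the $z$-checks, passes with probability at most $p^6(5+3\Fstab(\tau))/8\leq1-3\varepsilon/8$.
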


The tester is composed of two steps. In the first step, a Clifford circuit is learned that unentangles the last $m$ qubits from the first $k$ when the input is a stabilizer state. Then, a two-copy testing algorithm is run on the remaining $k$ qubits. For the first step, an important subroutine will be partial Bell sampling, Algorithm~\ref{alg:partial-bell_samp}. This routine takes $4$ copies of a state and returns $(\alpha,v) \in \F^{2k} \times \F^m$. The following facts about this primitive will be useful.

\begin{lemma}[Corollary 3.3 of Ref.~\cite{AS26}]\label{lem:partial_bell}
Let $\ket{\psi}$ be a stabilizer state with Lagrangian subspace $L$ representing its stabilizer group. The probability of observing $(\alpha,v)$ from partial difference sampling of $\ket{\psi}$ is given by
\begin{align}
\frac{\left|\{w\in\F^m:(\alpha_Xv,\alpha_Zw)\in L\}\right|}{2^n}\ .
\end{align}
\end{lemma}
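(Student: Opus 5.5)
The plan is a direct Fourier-analytic computation in the Pauli basis. Throughout, I assume that partial difference sampling (Algorithm~\ref{alg:partial-bell_samp}) does the following. It performs Bell sampling on the first $k$ qubits of a pair of copies and measures the last $m$ qubits of both copies in the computational basis. It does this on two pairs, and returns the sum of the two Bell labels as $\alpha\in\F^{2k}$ and the bitwise difference of the computational outcomes as $v\in\F^m$. This is the partial analogue of Montanaro's trick, which sums two Bell samples to remove the dependence on the state's signs and on complex conjugation.

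The first step is to write the stabilizer state as $\psi=2^{-n}\sum_{(a,b)\in L}\omega_{a,b}X^aZ^b$ with phases $\omega_{a,b}\in\{\pm1,\pm i\}$. Split each label into its memory part (first $k$ qubits) and its remaining part (last $m$ qubits), $a=(a_1,a_2)$ and $b=(b_1,b_2)$.

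Next, for one pair of copies, I compute the probability of a Bell outcome $\beta\in\F^{2k}$ together with computational outcomes $x,y\in\F^m$:
\[
\tr\bigl[(\Phi_\beta\otimes\ket x\bra x\otimes\ket y\bra y)\,\psi\otimes\psi\bigr].
\]
I expand $\Phi_\beta=4^{-k}\sum_{R}(\pm)\,R\otimes R^{T}$ and $\ket x\bra x=2^{-m}\sum_{c}(-1)^{c\cdot x}Z^c$. Taking the trace against $\psi\otimes\psi$ then leaves only pairs of stabilizer labels whose memory parts coincide (up to the transpose) and whose last-$m$ parts are $Z$-type, i.e.\ have $a_2=0$. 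Each surviving term carries the character $(-1)^{b_2\cdot x+b_2'\cdot y}$ and a phase $(-1)^{\langle\beta,\cdot\rangle}$ from the Bell projector.

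Summing over $x$ with $x+y=v$ fixed forces the two $Z$-parts $b_2,b_2'$ to agree, and leaves the character $(-1)^{b_2\cdot v}$. In other words, on the last $m$ qubits this reduces to computational difference sampling, exactly as in the proof of Theorem~\ref{thm:explicit-exact}. In the Bell part, the same phase bookkeeping as in ordinary Bell sampling applies.

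The step I expect to be the main obstacle is controlling the residual phases $\omega_{a,b}$ and the sign conventions of the Bell basis. A single pair yields a distribution that depends on the state's signs and on the $Y$-content via the transpose, and so does not collapse to a clean count. To handle this, I would form the distribution of the sum of the two independent pair-outcomes, which is the convolution of the single-pair distributions. In Fourier space this squares the characteristic function, and squaring kills the phases because $\omega_{a,b}^2=\pm1$ is determined by the symplectic form on $L$; this is the same cancellation as in Montanaro's Bell-difference sampling.

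After this cancellation, the Fourier coefficient of the output distribution at $(\gamma,c)$ is $2^{-n}\mathbf 1[(\text{lift of }(\gamma,c))\in L]$ up to a fixed linear map. Inverting the Fourier transform and using that $L$ is a subspace gives a uniform-on-coset indicator. Reading off coordinates, $(\alpha,v)$ has probability equal to $2^{-n}$ times the number of completions $w\in\F^m$ such that $(\alpha_Xv,\alpha_Zw)\in L$, with $X$-part $\alpha_X$ on the memory qubits and $v$ elsewhere, and $Z$-part $\alpha_Z$ on the memory qubits with $w$ free elsewhere. As a sanity check, I would confirm normalization: summing over $(\alpha,v)$ counts every element of $L$ exactly once, giving $|L|/2^n=1$. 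I would also check the extreme cases: $k=0$ recovers uniform sampling of $D^\perp$ from Theorem~\ref{thm:explicit-exact}, and $k=n$ recovers uniform Bell-difference sampling on $L$.
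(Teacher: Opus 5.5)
Your proposal is correct. Note that the paper gives no proof of this lemma: it imports it verbatim from Corollary~3.3 of Ref.~\cite{AS26}, so your Fourier computation is a self-contained alternative to the citation.

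The computation checks out. Use the pairing $[\beta,r]+v\cdot c$ on $\F^{2k}\times\F^m$ and set $L':=\{(r,c):(r_X0^m,r_Zc)\in L\}$. For one pair, after summing over $x$ with $x+y=v$ fixed, the characteristic function of $(\beta,v)$ vanishes off $L'$. On $L'$ it equals $\tr[(R\otimes Z^c)\psi]\,\tr[(\bar R\otimes Z^c)\psi]=(-1)^{r_X\cdot r_Z}$, taking $R$ Hermitian. The two-pair output is a sum of two i.i.d.\ copies, so its characteristic function is $\mathbf 1_{L'}$, and the output is uniform on the annihilator of $L'$.

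Two small corrections to your motivation:
\begin{enumerate}
\item The stabilizer signs already cancel within a single pair, since each trace appears once with $R$ and once with $\bar R=\pm R$. The second pair only removes the $Y$-parity shift $(-1)^{r_X\cdot r_Z}$, which is a character on $L'$ because $L$ is isotropic.
\item $\omega_{a,b}^2=(-1)^{a\cdot b}$ is fixed by the quadratic form $a\cdot b$, not by the symplectic form. All you actually need is that it equals $\pm1$.
\end{enumerate}

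The step you compress into ``reading off coordinates'' is the only one with real content, and ``$L$ is a subspace'' does not justify it. The Fourier side lives on a \emph{section} of $L$: elements with no $X$-action on the last $m$ qubits. The claimed support is a \emph{projection} of $L$: pairs $(\alpha,v)$ that admit a completion $w$. These two are dual precisely because $L$ is Lagrangian.

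Explicitly, let $E:=\{(a,b)\in\F^{2n}:a_{k+1}=\cdots=a_n=0\}$. Then:
\begin{itemize}
\item $L'\cong L\cap E$.
\item The pairing equals the symplectic product $[(\beta_Xv,\beta_Zw),(r_X0^m,r_Zc)]$ for every $w$.
\item Using $L^{\perp_s}=L$, one gets $(L\cap E)^{\perp_s}=L^{\perp_s}+E^{\perp_s}=L+\{(0,0^kw):w\in\F^m\}$.
\end{itemize}
This identifies the annihilator of $L'$ with $\{(\alpha,v):\exists w,\ (\alpha_Xv,\alpha_Zw)\in L\}$.

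The completions form a coset of $D_m$. Your normalization check, or the count $|L\cap E|=2^k|D_m|$ that follows from the same duality, then turns the inverse-transform constant $|L'|/(4^k2^m)$ into $|D_m|/2^n$. That is exactly the stated formula.

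Compared with the bare citation, your route shows where the two-pair symmetrization and the Lagrangian property enter. As a by-product, it shows that the bound in Lemma~\ref{lem:bell_samp_null} is actually an equality.
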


\begin{lemma}[Probability of partial Bell sampling returning null vectors]\label{lem:bell_samp_null}
Let $D_m:=\{w\in\F^m:(0^k0^m,0^kw)\in L\}$. Then the probability of partial Bell sampling returning $v=0$ is at most $\vert D_m\vert / 2^m$.
\end{lemma}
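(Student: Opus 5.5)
The plan is to sum the probability formula of Lemma~\ref{lem:partial_bell} over all outcomes with $v=0$. Concretely,
\begin{align}
\Pr[v=0]=\sum_{\alpha\in\F^{2k}}\frac{\left|\{w\in\F^m:(\alpha_X0^m,\alpha_Zw)\in L\}\right|}{2^n},
\end{align}
and the goal is to show that this count is at most $2^{n}\,|D_m|/2^m=2^k|D_m|$.

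The key structural step is a coset argument. Fix $\alpha=(\alpha_X,\alpha_Z)$ and consider the set $W_\alpha:=\{w:(\alpha_X0^m,\alpha_Zw)\in L\}$.
\begin{itemize}
\item If $W_\alpha$ is nonempty, then any two of its elements $w,w'$ satisfy $(0^k0^m,0^k(w+w'))\in L$, since $L$ is a subspace. Hence $w+w'\in D_m$.
\item So $W_\alpha$ is either empty or a coset of $D_m$, and in either case $|W_\alpha|\leq|D_m|$.
\end{itemize}

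It remains to count the $\alpha$ for which $W_\alpha\neq\emptyset$. The naive bound uses all $2^{2k}$ values of $\alpha$. That would give $\Pr[v=0]\leq 2^{2k}|D_m|/2^n=2^{k}|D_m|/2^{m}$, which is too weak by a factor $2^k$. So the main obstacle is to show that at most $2^k$ values of $\alpha$ are admissible.

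To do this, consider the subspace
\[
L':=\{(\alpha_X,\alpha_Z,w):(\alpha_X0^m,\alpha_Zw)\in L\}\subseteq\F^{2k+m}.
\]
Its projection onto $\alpha$ has kernel $D_m$, so the number of admissible $\alpha$ equals $|L'|/|D_m|$, and the whole sum is exactly $|L'|$. The claim therefore reduces to showing $|L'|\leq 2^k|D_m|$.

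Here $L'$ is $L$ intersected with the subspace $V:=\{X\text{-part supported on the first }k\}$, which has dimension $2k+m$. For this I would use the standard Lagrangian dimension bound
\[
\dim(L\cap V)\leq \dim V-\tfrac12\dim(V\cap V^{\perp_s})\cdot 0+\cdots,
\]
stated more cleanly as follows. Since $L$ is Lagrangian, $L\cap V$ is isotropic inside $V$. The radical of $V$ is $V^{\perp_s}\cap V=\{(0,0^k w)\}$, of dimension $m$. An isotropic subspace $I$ of $V$ therefore satisfies
\[
\dim I\leq \dim(I\cap \mathrm{rad})+\tfrac12(\dim V-m)=\dim(I\cap\mathrm{rad})+k.
\]
In our case $I\cap\mathrm{rad}$ is exactly $\{(0,0^kw)\in L\}\cong D_m$. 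This gives $|L'|\leq 2^k|D_m|$, and dividing by $2^n$ yields $\Pr[v=0]\leq 2^k|D_m|/2^n=|D_m|/2^m$, as claimed.

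The quotient step needs to be checked carefully: $I/(I\cap\mathrm{rad})$ embeds isotropically into the nondegenerate quotient $V/\mathrm{rad}$, which has dimension $2k$. An isotropic subspace of a nondegenerate $2k$-dimensional symplectic space has dimension at most $k$.
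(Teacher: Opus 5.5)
Your proof is correct and follows the paper's argument. You sum the formula of Lemma~\ref{lem:partial_bell} over $\alpha$, show each nonempty fiber $W_\alpha$ is a coset of $D_m$, and bound the number of admissible prefixes by $2^k$ via isotropy. Your quotient by the radical $\{(0,0^kw)\}$ of $V$ is a more abstract phrasing of the paper's direct observation: because the last $m$ $X$-bits vanish, the symplectic product of two such elements of $L$ reduces to that of their $k$-qubit prefixes, so the prefix set is isotropic in $\F^{2k}$. Delete the garbled first display of the dimension bound; the clean statement that follows it is all you need.
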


\begin{proof}
From Lemma~\ref{lem:partial_bell}, we know that the probability of observing $v=0$ is given by
\begin{align}
\frac{\left|\{(\alpha,w)\in\F^{2k}\times\F^m:(\alpha_X0^m,\alpha_Zw)\in L\}\right|}{2^n}\ .
\end{align}
Let $L_k:=\{\alpha\in\F^{2k}:\exists w\in\F^m,\ (\alpha_X0^m,\alpha_Zw)\in L\}$ be the subspace of prefixes. Since the last $m$ $X$-bits are zero, this must be an isotropic set and hence is of size no more than $2^k$. For each $\alpha\in L_k$, the possible suffixes $w$ form one coset of $D_m$, so there are $|D_m|$ such lifts. Hence, the number of stabilizers with no $X$ action on the last $m$ qubits is at most $\vert L_k\vert \vert D_m\vert \leq 2^k \vert D_m \vert$.
\end{proof}

Notice that Lemma~\ref{lem:partial_bell} implies that we can observe $(\alpha,v)$ under partial Bell sampling if and only if there exists a $w \in \F^m$ such that $(\alpha_X v, \alpha_Z w)$ is a stabilizer. When $v \neq 0$, we can employ a strategy similar to the single-copy learning algorithm to first map the stabilizer to one of the form $(\alpha_X e_i, \alpha_Z w')$, where $i\in\{1,\ldots,m\}$ satisfies $v_i=1$ and $j=k+i$ is the global pivot-qubit index. From here, we can remove the action on the first $k$ qubits and, with probability $1/2$, create a $Z$-type stabilizer which acts only on the last $m$ qubits and must act on qubit $j$. That is, 
if $v\neq0$, the update never decreases $\dim D_m$ and increases it with probability at least $1/2$. Once $D_m = \F^m$, the last $m$ qubits must be in some computational basis state and unentangled from the first $k$.

\begin{lemma}[Memory-efficient Clifford disentangling of stabilizer states]\label{lem:unentangle}
Let $\ket\psi$ be a stabilizer state and $m=n-k$. For $0<\delta\leq1/3$, the protocol below uses $k$ qubits of persistent quantum memory and $O(m+\log(1/\delta))$ copies to learn a Clifford $C$ such that, with probability at least $1-\delta$, $C\ket\psi=\ket\phi\otimes\ket z$, where $\ket\phi$ is a $k$-qubit stabilizer state and $\ket z$ is an $m$-qubit computational-basis state. If $m=0$, take $C=I$ and use no compression rounds.
\end{lemma}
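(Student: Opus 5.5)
The plan is to mirror the proof of Theorem~\ref{thm:explicit-exact}, with partial Bell sampling replacing computational difference sampling and $D_m$ replacing $D$. We keep a current Clifford $C$, initially $I$, and the potential $\ell:=\dim D_m(C\psi C^\dagger)$, where $D_m$ is the space of $Z$-type stabilizer labels supported on the last $m$ qubits, as in Lemma~\ref{lem:bell_samp_null}.

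In each round, apply $C$ to fresh copies and run partial Bell sampling. This uses $4$ copies and needs only $k$ memory qubits, because only the first $k$ qubits of one copy have to be held coherently while the $m$-qubit tails are measured. The output is $(\alpha,v)$. By Lemma~\ref{lem:partial_bell}, if $v\neq0$ there is some $w$ with $(\alpha_Xv,\alpha_Zw)\in L_{C\psi}$. Pick the pivot $i$ as the first index with $v_i=1$, and set $j=k+i$.

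The update is built in three steps.
\begin{enumerate}[(i)]
\item Apply $\prod_{i'\neq i,\,v_{i'}=1}\mathrm{CNOT}_{j\to k+i'}$. This sends the $X$-part of that stabilizer on the last $m$ qubits to $e_i$.
\item Clear its action on the first $k$ qubits. Use CNOTs from $j$ to first-block qubits $a$ with $(\alpha_X)_a=1$ to remove the $X$-support there. Use CZ gates between $j$ and first-block qubits $a$ with $(\alpha_Z)_a=1$ to remove the $Z$-support there. (Conjugating $X_j$ by $\mathrm{CZ}_{j,a}$ produces $X_jZ_a$, so this cancels the $Z$ entries.)
\item Apply $H_j$ or $H_jS_j$, chosen uniformly at random.
\end{enumerate}

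The key check is preservation, as in Theorem~\ref{thm:explicit-exact}. Take $(0,0^k s)$ with $s\in D_m$. This label commutes with $(\alpha_Xv,\alpha_Zw)$, which gives $\langle v,s\rangle=0$. After step (i), the pivot coordinate becomes $s_i+\langle v+e_i,s\rangle=0$. Steps (ii) and (iii) touch qubit $j$ as a control or a target. CNOT$_{j\to a}$ maps $Z_a$ to $Z_jZ_a$, but these labels have no first-block support. CZ acts trivially on $Z$ strings. So every element of $D_m$ stays $Z$-type, stays supported on the last $m$ qubits, and is trivial on $j$. Hence step (iii) leaves it fixed, and $\ell$ never decreases. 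The converted stabilizer now acts on qubit $j$ alone as $X_j$ or $Y_j$, up to $Z$-type factors on the tail. With probability $1/2$ the random choice makes it $Z$-type with $Z_j$ support. It lies outside the old $D_m$, since the old elements are trivial on $j$, so $\ell$ increases by one. (I would double-check that the CZ step does not reintroduce first-block support. It doesn't, because $Z_a$ commutes with CZ.)

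For the progress rate, Lemma~\ref{lem:bell_samp_null} gives $\Pr[v\neq0]\geq1-2^{\ell-m}$. So with codimension $r=m-\ell$, each round decreases $r$ with probability at least $p_r=(1-2^{-r})/2$. This is exactly the chain from Theorem~\ref{thm:explicit-exact}, with $n$ replaced by $m$. The same geometric-sum tail bound~\cite{Jan18} gives $2m+O(\sqrt{m\log(1/\delta)}+\log(1/\delta))$ rounds, hence $O(m+\log(1/\delta))$ copies.

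Finally, once $D_m=\F^m$, the $m$ independent $Z$ stabilizers on the tail force the tail to be a computational-basis state $\ket z$ in product with the rest. The remaining $k$-qubit factor is then a stabilizer state. The case $m=0$ is trivial. The main obstacle I expect is step (ii): confirming that the first-block clearing gates do not disturb $D_m$, and that the final label is exactly $X_j$ or $Y_j$ times tail $Z$'s. I will handle this with the commutation argument above.
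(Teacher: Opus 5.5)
Your proposal is correct and follows essentially the same route as the paper: pivot with $F_v$ on the last $m$ qubits, clear the action on the first $k$ qubits, apply a random $H_j$ or $H_jS_j$, preserve $D_m$ via $\langle v,s\rangle=0$, and obtain the progress rate from Lemma~\ref{lem:bell_samp_null} with the same geometric tail bound. The one difference is cosmetic: the paper clears the first block with a single controlled-Pauli $W_{\alpha,j}$. That gate equals your CNOT/CZ product up to a phase gate on the pivot, and this phase is absorbed by the random choice between $H_j$ and $H_jS_j$.
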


\begin{proof}
Let $D_m$ be the subspace of labels $w\in\F^m$ for which $\id_k\otimes Z(w)$ is a stabilizer up to sign. These are the diagonal stabilizer labels acting only on the last $m$ qubits. When $\log_2 \vert D_m \vert = m$, the last $m$ qubits must be in the state $\ket{z}$ for some $z\in \F^m$.

Say that $(\alpha,v)$ is obtained from partial Bell sampling where $v\neq 0$. The next step of the algorithm is to map the stabilizer $(\alpha_X v, \alpha_Z w)$, which exists for some $w$ via Lemma~\ref{lem:partial_bell}, to one of the form $(00,0w')$. As we have previously seen,
\begin{align}
    (\alpha_Xv,\alpha_Zw)\overset{F_v}{\longmapsto}(\alpha_Xe_i,\alpha_Zw')\ ,
\end{align}
where $i$ is the first index with $v_i=1$, $j=k+i$, $F_v$ acts on the last $m$ qubits, and $w'\in\F^m$. Choose a Hermitian Pauli representative $P_\alpha$ on the first $k$ qubits, so $P_\alpha^2=I$. Define the controlled-Pauli Clifford
\begin{align}
    W_{\alpha,j} & = \id_k \otimes \ket{0}\bra{0}_j + P_\alpha \otimes \ket{1}\bra{1}_j \ ,
\end{align}
which we define as acting trivially on all other qubits. That is, $W_{\alpha,j}$ applies the Pauli string $P_\alpha$ to the first $k$ qubits if qubit $j$ is in the state $\ket{1}$ and does nothing otherwise. One can verify that
\begin{align}
    W_{\alpha,j}(P_\alpha \otimes X_j) W_{\alpha,j} & = \id_k \otimes X_j\ , \\
    W_{\alpha,j}(P_\alpha \otimes Y_j) W_{\alpha,j} & = \id_k \otimes Y_j\ ,
\end{align}
where we have suppressed identities on the remaining $m-1$ qubits for ease of notation. Hence, $W_{\alpha,j}$ takes the stabilizer $(\alpha_X e_i, \alpha_Z w')$ to $(0^ke_i,0^kw')$. As in the case of the learning algorithm, one choice of $b\in\{0,1\}$ further reduces this to a $Z$-type stabilizer that acts nontrivially only on the last $m$ qubits and must include a $Z_j$.

It remains to argue that this process cannot decrease the size of $D_m$, only increase it. As we have seen before, $F_v$ does not change the size of $D_m$ at all. Since all bitstrings in $D_m$ after applying $F_v$ have zero in local coordinate $i$, $W_{\alpha,j}$ leaves $D_m$ unchanged. Hence, when $v\neq0$, each round preserves $\dim D_m$ or increases it, with an increase of at least one occurring with probability at least $1/2$. By Lemma~\ref{lem:bell_samp_null}, the probability of a single step increasing the dimension of $D_m$ by at least $1$ is at least $\frac{1}{2}(2^m-\vert D_m \vert)/2^m$. Similarly to the prior analysis of the learning algorithm, $2m+O(\sqrt{m\log(1/\delta)}+\log(1/\delta))$ steps then suffice to succeed with probability at least $1-\delta$.
\end{proof}

\begin{algorithm}[t]
\caption{Partial Bell difference sampling using $k$ qubits of memory}
\label{alg:partial-bell_samp}
\KwInput{Four copies of $\ket\psi$.}
\KwOutput{$(\alpha,v)$ with $\alpha\in\F^{2k}$ and $v\in\F^m$.}
\For{$t=0$ to 1}{
Store the first $k$ qubits of one copy of $\ket\psi$;
Measure the remaining $m$ qubits in the computational basis, obtaining
$z_{0}^t\in\F^m$;
Load another copy of $\ket\psi$;
Bell-sample its first $k$ qubits against the $k$ qubits in memory,
obtaining $\alpha^t\in\F^{2k}$;
Measure the remaining $m$ qubits in the computational basis, obtaining
$z_{1}^t\in\F^m$;
}
\Return{$(\alpha^0\oplus \alpha^1,\ z_0^0 \oplus z_1^0 \oplus z_0^1 \oplus z_1^1)$};
\end{algorithm}

\begin{algorithm}[t]
\caption{Stabilizer testing with $k$ qubits of memory}
\label{alg:k_memory_test}
\KwInput{$4T_1+1+6T_2$ independent copies of $\rho$, where $m=n-k$, $T_1=2m+O(\sqrt{m\log(1/\delta)}+\log(1/\delta))$ is chosen for compression error at most $\delta$, and $T_2=\lceil8\log(1/\delta)/(3\eps)\rceil$. Take $T_1=0$ if $m=0$.}
$C\leftarrow I$\;
\For{$t = 1$ to $T_1$}{
$(\alpha, v)\leftarrow$ partial Bell difference sampling $C \ket{\psi}$\;
If $v\neq0$, let $i$ be the first index with $v_i=1$, set $j=k+i$, sample an independent uniform bit $b$, and update
\begin{align}
C \leftarrow H_jS_j^bW_{\alpha, j} F_v C\ .
\end{align}
If $v=0$, leave $C$ unchanged.
}
Measure the last $m$ bits of one copy of $C\ket{\psi}$ to obtain one sample $z \in \F^m$\;
\For{$t=1$ to $T_2$}{
Before using each copy, measure the last $m$ qubits of $C\ket{\psi}$ in the computational basis. If a string other than $z$ is ever observed, $\mathsf{Reject}$\;
Otherwise, run the six-copy routine of Ref.~\cite[Theorem~3.7]{iyer2024tolerant} on the first $k$ qubits and $\mathsf{Reject}$ if its output is $-1$ (for $k=0$, this routine always outputs $+1$)\;
}
$\mathsf{Accept}$\;
\end{algorithm}

We can now present the complete tester, which appears in Algorithm~\ref{alg:k_memory_test}. The first step is to learn a Clifford that would unentangle the state if it was indeed a stabilizer state. The second is to run the $\{+1,-1\}$-valued routine of Ref.~\cite[Theorem~3.7]{iyer2024tolerant}, accepting on $+1$. It uses six copies, but only two at a time and no ancillary qubits; on the residual $k$-qubit subsystem it therefore fits the $k$-qubit persistent-memory budget. To conclude, we prove that this tester successfully distinguishes between stabilizer and non-stabilizer states.

\begin{proof}[Proof of Theorem~\ref{thm:memory}]
Fix $0<\delta\leq1/3$; the theorem follows by taking $\delta=1/6$. Lemma~\ref{lem:unentangle} ensures that we can learn a circuit that would unentangle the last $m$ qubits for a stabilizer state with probability at least $1-\delta$ using $O(m+\log 1/\delta)$ samples.
If the state is a stabilizer state, the test of Ref.~\cite{iyer2024tolerant} passes with probability $1$, and hence a stabilizer state is accepted with probability at least $1-\delta$.

Now suppose $\FS(\rho)\leq1-\eps$, and condition on the complete training transcript and the chosen string $z$. In this case, $C$ need not disentangle the last $m$ qubits. Let $p$ be the probability of observing the fixed computational-basis state $\ket{z}$ when measuring the last $m$ qubits of $C\rho C^\dagger$. If $p=0$, every verification block rejects. Otherwise, let $\tau=p^{-1}\bra z C\rho C^\dagger\ket z$ be the normalized conditional state on the first $k$ qubits. For every $k$-qubit stabilizer $\ket S$, the state $C^\dagger(\ket S\otimes\ket z)$ is a stabilizer, so $p\bra S\tau\ket S\leq\FS(\rho)$. Maximizing over $S$ gives $p\FS(\tau)\leq\FS(\rho)$. Theorem A.3 of Ref.~\cite{iyer2024tolerant} implies that the probability that the tester falsely passes, when $z$ was measured on all copies, is at most $(5+3\FS(\tau))/8$. Hence, the probability that one run of the tester, including the measurements of $z$, passes, is at most
\begin{align}
    p^6\frac{5+3\FS(\tau)}{8}
    &\leq p^5\frac{5p+3\FS(\rho)}{8}\\
    &\leq 1-\frac{3\eps}{8}\ .
    \nonumber
\end{align}
The verification blocks use new copies and are independent conditional on the training transcript. Repeating $T_2=\lceil8\log(1/\delta)/(3\eps)\rceil$ times bounds the probability that all blocks accept by $(1-3\eps/8)^{T_2}\leq\delta$. Together with completeness, this gives the stated guarantee. The total copy cost is $O(m+\log(1/\delta)+\eps^{-1}\log(1/\delta))$.

For optimality at constant confidence, Ref.~\cite[Theorem~1.1 and Appendix~B]{AS26} gives the $\Omega(n-k)$ lower bound with pure inputs. Lemma~\ref{lem:testing-witness}, with $a=0$ and $b=\eps\leq1/5$, gives $\Omega(1/\eps)$ copies even for possibly collective measurements. Taking the maximum of these two lower bounds gives $\Omega(n-k+1/\eps)$.
\end{proof}

\section{Learning states with low stabilizer nullity}\label{sec:low-nullity}

The adaptive mechanism is not restricted to exact stabilizer states.
We now show that it also applies when only a large Pauli stabilizer subspace remains. 
For a pure state $\ket{\psi}$, recall that $L_\psi\leq\F^{2n}$ denotes its unsigned stabilizer space and define its \emph{stabilizer nullity} by $\nu(\psi):=n-\dim L_\psi$.
In the extreme case, pure states with $\nu(\psi)=0$ are stabilizer states. 
When $\nu(\psi)\leq r$, the difference samples used by our learning algorithm still reveal a genuine stabilizer direction with probability $2^{-O(r)}$, while the remaining difference samples do not destroy the diagonal stabilizers already exposed.
This key observation enables the following theorem.

\begin{theorem}[Learning states of low stabilizer nullity]
\label{thm:low-nullity}
Let $n\geq1$, $0\leq r\leq n$ be an integer, and let $\ket\psi$ be an unknown pure $n$-qubit state with $\nu(\psi)\leq r$.
For $0<\eps\leq1$ and $0<\delta\leq1/3$, there is an adaptive single-copy protocol that outputs a classical description of a state $\widehat\rho$ satisfying $\tfrac12\|\widehat\rho-\ket{\psi}\bra{\psi}\|_1\leq\eps$ with probability at least $1-\delta$, using
\begin{equation}
\label{eq:low-nullity-complexity}
N=O\left(2^r\left(n+\log\frac1\delta\right)+\frac{2^r}{\eps^2}\left(r+\log\frac1\delta\right)\right)
\end{equation}
copies of $\ket{\psi}$. 
The classical running time is polynomial in $n$, $2^r$, $1/\eps$, and $\log(1/\delta)$.
\end{theorem}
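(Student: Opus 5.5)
We run Algorithm~\ref{alg:stab_learn} essentially unchanged on $\ket\psi$ and track $D_t:=D(C_t\psi C_t^\dagger)$, the subspace of $Z$-type labels in the unsigned stabilizer space of the transformed state. The first step is a structural lemma. Let $\ket\phi$ have unsigned stabilizer space $L_\phi$ with $\dim L_\phi\geq n-r$. Then its computational-basis distribution is supported on a coset $z+D(\phi)^\perp$, since every diagonal stabilizer fixes the parity $\langle s,x\rangle$. Hence every difference $u=x+y$ of two outcomes lies in $D(\phi)^\perp$. This is exactly the property used in the proof of Theorem~\ref{thm:explicit-exact} to show that $F_u$ followed by $H_iS_i^b$ fixes all existing diagonal labels. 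Therefore $\dim D_t$ never decreases, \emph{whatever} $u$ is, and this monotonicity needs no stabilizer promise.

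The second step is progress. Write $P$ for the projection of $L_\phi$ onto $X$-coordinates. Then $P\subseteq D(\phi)^\perp$ and $\dim P=\dim L_\phi-\dim D(\phi)\geq \dim D(\phi)^\perp - r$. If the sampled $u$ is nonzero and lies in $P$, the argument of Theorem~\ref{thm:explicit-exact} applies verbatim: with probability $1/2$ over $b$, $\dim D$ increases by one.

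We need to show $\Pr[u\in P\setminus\{0\}]\geq 2^{-O(r)}$ whenever $\dim D(\phi)^\perp>r$. For this we use the characteristic-function identity $\Pr[u=a]=2^{-n}\sum_{s}(-1)^{\langle a,s\rangle}|\bra\phi Z(s)\ket\phi|^2$ for difference sampling. It implies that $u$ restricted to cosets of $P$ in $D^\perp$ has mass determined by the Weyl expectation values. Alternatively, we decompose $\ket\phi$ via its stabilizer group $G$ of size $2^{n-r}$: after a Clifford, $\ket\phi=U(\ket{\theta}\otimes\ket{0^{n-r}})$, which shows the outcome distribution is a mixture over at most $2^r$ "logical" branches of flat distributions on cosets of a common subspace of dimension $\dim P$. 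Two outcomes then land in the same coset with probability at least $2^{-r}$ (by Cauchy--Schwarz over at most $2^r$ cosets), giving a conditionally uniform difference in $P$. Combining, the success probability per round is at least $p_\ell\geq 2^{-r-1}(1-2^{-(\dim P)})$. Summing geometric waiting times over the at most $n-r$ productive steps and applying the tail bound \cite[Theorem~2.1]{Jan18} as before yields $O(2^r(n+\log(1/\delta)))$ rounds, after which $\dim D\geq n-r$. The transformed state is then supported on an affine subspace $z+V$ with $\dim V\leq r$.

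The third step is readout and tomography. We measure $O(2^r\log(1/\delta))$ fresh copies of $C\ket\psi$ to learn $z+V$, or rather a subspace $\widehat V$ capturing all but $\eps^2$ of the mass; since the support is exactly an affine subspace we can take the span of observed differences. A CNOT circuit then maps $z+\widehat V$ to $\ket{z'}\otimes(\text{$r$-qubit register})$. Finally we run pure-state tomography with random Clifford measurements on the $r$ logical qubits, using $O(2^r(r+\log(1/\delta))/\eps^2)$ copies, and output $C^\dagger(\widehat\theta\otimes\ket{z'})C$. A triangle inequality accounts for the truncated mass.

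The main obstacle is the second step, the $2^{-O(r)}$ lower bound on producing a genuine stabilizer direction for arbitrary non-stabilizer amplitudes. The coset-collision argument must be made rigorous via the canonical decomposition $\ket\phi=U(\ket\theta\otimes\ket{0})$ with $U$ Clifford, checking that $U$'s action on $X$-parts is compatible with the coset structure. If that fails, I would fall back on the Fourier identity and lower-bound the mass of $u$ on $P$ by the $2^{n-r}$ unit terms $|\bra\phi P\ket\phi|^2=1$.
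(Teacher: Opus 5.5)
Your overall structure matches the paper's: monotonicity from $u\in D(\phi)^\perp$, progress via collisions within cosets of $P=\pi_X(L_\phi)$, the geometric tail bound, then affine-support recovery and logical tomography. However, the step you flag as the main obstacle is not established, and neither route you propose closes it as written. The missing observation is a one-liner. If $(u,v)\in L_\phi$, then $X^uZ^v$ maps $\ket x$ to a phase times $\ket{x+u}$, and since $\ket\phi$ is an eigenvector, $|\langle x+u|\phi\rangle|=|\langle x|\phi\rangle|$ for every $x$. So the outcome distribution is invariant under translation by $P$, hence uniform on each of the $2^{\nu(\phi)}$ cosets of $P$ inside $z+D(\phi)^\perp$. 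Cauchy--Schwarz then gives $\Pr[u\in P]\geq2^{-\nu(\phi)}$, and conditional uniformity of $u$ on $P$ gives the factor $1-2^{-\dim P}$.

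Your Clifford-decomposition route does not deliver this automatically. In $\ket\phi=\sum_j\theta_jU(\ket j\otimes\ket{0^{n-r}})$ the branches are superposed, not mixed, and for a generic Clifford $U$ they are not supported on single $P$-cosets. For example, with $n=r=1$ and $L_\phi=\{0\}$ you may take $U=H$; its branches $\ket\pm$ are not point masses although $P=\{0\}$. The route works only if the logical $Z$'s are chosen diagonal, i.e.\ $Z(s)$ with $s$ ranging over representatives of $P^\perp/D(\phi)$, which is the translation-invariance argument in disguise.

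Your Fourier fallback can be completed, but not as stated. Only diagonal Paulis enter $\Pr[u=a]=2^{-n}\sum_s(-1)^{\langle a,s\rangle}|\bra\phi Z(s)\ket\phi|^2$, so the unit terms are the $|D(\phi)|$ diagonal stabilizers, not $2^{n-r}$ of them. Summing over $a\in P$ gives $\Pr[u\in P]=2^{-n}|P|\sum_{s\in P^\perp}|\bra\phi Z(s)\ket\phi|^2\geq2^{-n}|P||D(\phi)|=2^{-\nu(\phi)}$. Excluding $u=0$ additionally needs $\bra\phi Z(s)\ket\phi=0$ for $s\notin P^\perp$ (anticommutation with a stabilizer), which yields $\Pr[u=0]=\Pr[u\in P]/|P|$.

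There is also a smaller error in the readout step. $O(2^r\log(1/\delta))$ samples do not guarantee that the affine span of the samples captures mass $1-\Theta(\eps^2)$. The sample count must scale with $1/\eps^2$, not $2^r$, because a part of the support carrying mass of order $\eps^2$ is typically seen only after $\Omega(1/\eps^2)$ samples. The argument that works is this: each sample outside the current span raises its dimension, which can happen at most $r$ times. It needs $O((r+\log(1/\delta))/\eta)$ samples with $\eta=\Theta(\eps^2)$, which still fits in the stated budget.

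Finally, postselect the $n-r'$ non-logical qubits on the fixed string before running tomography, so that the accepted copies are exactly the pure residual state. The acceptance probability is at least $1-\eta$, handled by a Chernoff bound. Tracing those qubits out instead leaves a slightly mixed state, to which the pure-state projected-least-squares guarantee does not directly apply.
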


Before we prove Theorem~\ref{thm:low-nullity}, we remark that states with $\nu(\psi)\leq r$ also include those prepared by circuits doped with at most $\lfloor r/2\rfloor$ single-qubit non-Clifford gates or $r$ $T$ gates in the Clifford+$T$ setting. A gate on at most $q$ qubits can reduce stabilizer dimension by at most $2q$, because the subspace acting trivially on those qubits is preserved; a $T$ gate preserves the subspace with no $X$ component on its qubit and reduces the dimension by at most one.
Thus, Theorem~\ref{thm:low-nullity} provides a learner whose dependence on $n$ is linear for fixed $t$, accuracy, and confidence.

\begin{proof}
We run the same adaptive rounds as in Theorem~\ref{thm:explicit-exact}
(Algorithm~\ref{alg:stab_learn}).
Fix an iteration $1\leq t\leq T$, and let $\ket{\varphi}=C\ket{\psi}$ denote the state after the Clifford transformations chosen in the previous rounds.
Let $r_0:=\nu(\psi)=\nu(\varphi)\leq r$, where we have used the fact that Clifford transformations preserve stabilizer
nullity.
We keep the notation in the proof of
Theorem~\ref{thm:explicit-exact}, and recall that, given a pure state $\ket{\varphi}$, $L_\varphi\leq\F^{2n}$ is the unsigned stabilizer space, $D(\varphi)=\{s_z:(0,s_z)\in L_\varphi\}$ is the space of $Z$-type stabilizers, and $\ell=\log_2(|D(\varphi)|)=\dim D(\varphi)$ is the dimension of $D(\varphi)$.

We first characterize which strings in $D(\varphi)^\perp$ can occur as the $X$-part of a stabilizer. 
Consider the linear map from $L_\varphi$ to $\F^n$ given by $(s_x,s_z)\mapsto s_x$ with kernel $\{0\}\times D(\varphi)$. 
Since $\dim L_\varphi=n-r_0$, the set of $u\in\F^n$ for which there exists some $v\in\F^n$ such that $(u,v)\in L_\varphi$ is a linear subspace of dimension $n-r_0-\ell$, and every such $u$ belongs to $D(\varphi)^\perp$. 
Moreover, for every $s\in D(\varphi)$, the two stabilizers $(u,v)$ and $(0,s)$ commute, and hence $\langle u,s\rangle=0$.
Since $\dim D(\varphi)^\perp=n-\ell$, the set of strings in
$D(\varphi)^\perp$ which occur as the $X$-part of a stabilizer has codimension $r_0$. Denote this subspace by $V:=\pi_X(L_\varphi)$.

We next consider the distribution obtained from computational difference sampling. 
As in the proof of Theorem~\ref{thm:explicit-exact}, the computational-basis distribution of $\ket{\varphi}$ is supported on an affine coset of $D(\varphi)^\perp$. 
In addition, whenever $(u,v)\in L_\varphi$, the corresponding Pauli stabilizer maps $\ket{x}$ to $\ket{x+u}$, up to a phase. 
It follows that the computational-basis probabilities are invariant under translation by every $u$ which occurs as the $X$-part of a stabilizer.
Therefore, the computational-basis distribution is uniform within each coset of this $(n-r_0-\ell)$-dimensional subspace. 
There are $2^{r_0}$ such cosets inside the relevant affine coset of $D(\varphi)^\perp$. 
Let $p_a$ denote the probability mass of the $a$-th coset. 
If $x,y$ are two independent computational-basis samples and $u=x+y$, then $u$ is the $X$-part of a stabilizer whenever $x$ and $y$ lie in the same coset. 
Hence, we have
\begin{equation}
\Pr\left[\exists v\in\F^n:\ (u,v)\in L_\varphi\right]=\sum_a p_a^2\geq 2^{-r_0}.
\end{equation}
Here, the final inequality follows from Cauchy--Schwarz.
Conditioned on $x$ and $y$ lying in the same coset, their difference is uniformly distributed over the corresponding $(n-r_0-\ell)$-dimensional linear subspace. 
Hence, as long as $\ell<n-r_0$,
\begin{equation}
\label{eq:low-nullity-nonzero}
\Pr\left[u\neq0\text{ and }\exists v\in\F^n:\ (u,v)\in L_\varphi\right]\geq2^{-r_0}\left(1-2^{-(n-r_0-\ell)}\right).
\end{equation}
Importantly, even when $u$ is not the $X$-part of a stabilizer, 
the update in Algorithm~\ref{alg:stab_learn} cannot decrease $\ell$. 
Since both computational-basis outcomes belong to the same affine coset of $D(\varphi)^\perp$, every difference sample satisfies $u=x+y\in D(\varphi)^\perp$.
Suppose that $u\neq0$, and let $i$ be the first index such that $u_i=1$.
We apply the same CNOT circuit $F_u$ as in the proof of
Theorem~\ref{thm:explicit-exact}. Recall that its action on a $Z$-type stabilizer is
\begin{equation}
(0,s)\stackrel{F_u}{\longmapsto}\left(0,s+e_i\langle u+e_i,s\rangle\right).
\end{equation}
Since $u\in D(\varphi)^\perp$, all $s\in D(\varphi)$ satisfy $\langle u,s\rangle=0$. 
As in the proof of Theorem~\ref{thm:explicit-exact}, it follows that every pre-existing $Z$-type stabilizer acts trivially on qubit $i$ after applying $F_u$.
Hence, the subsequent application of either $H_i$, which is the $H$ gate on the $i$-th qubit, or $H_iS_i$, which is the $S$ gate followed by the $H$ gate on the $i$-th qubit, leaves all existing $Z$-type stabilizers unchanged. 
Thus, $\ell$ can never decrease.

Now suppose that $u\neq0$ and there exists some $v\in\F^n$ such that $(u,v)\in L_\varphi$.
Then $F_u$ maps this stabilizer to one of the form $(u,v)\stackrel{F_u}{\longmapsto}(e_i,v')$ for some $v'\in\F^n$. 
As in Theorem~\ref{thm:explicit-exact}, one of the two choices $H_i$ and $H_iS_i$ maps this stabilizer to a new $Z$-type stabilizer that acts nontrivially on qubit $i$.
Since all existing $Z$-type stabilizers act trivially on qubit $i$, this stabilizer is independent of $D(\varphi)$, and hence $\ell$ increases by one. 
The correct correction is chosen with probability $1/2$.
Combining this with Eq.~\eqref{eq:low-nullity-nonzero}, whenever
$\ell<n-r_0$, a round increases $\ell$ with probability at least
\begin{equation}
\label{eq:low-nullity-progress}
p_\ell\geq\frac{2^{-r_0}}{2}\left(1-2^{-(n-r_0-\ell)}\right)\geq 2^{-r_0-2}\geq2^{-r-2}.
\end{equation}

We now bound the number of rounds required to reach $\ell=n-r_0$. 
When $n-r_0-\ell=j$, Eq.~\eqref{eq:low-nullity-progress} gives a transition probability at least $q_j:=2^{-r_0-1}(1-2^{-j})$.
As before, an unsuccessful round does not decrease $\ell$. Hence, if $G_j\sim\operatorname{Geom}(q_j)$ are independent geometric random variables, the stopping time is stochastically dominated by $T_{\mathrm{stop}}\preceq\sum_{j=1}^{n-r_0}G_j$.
Its expectation is therefore bounded by
\begin{align}
\mathbb E[T_{\mathrm{stop}}]\leq2^{r_0+1}\sum_{j=1}^{n-r_0}\frac{1}{1-2^{-j}}=O(2^{r_0}n)=O(2^rn).
\end{align}
Since $q_j\geq2^{-r_0-2}$ for all $j\geq1$, the same upper-tail bound for sums of independent geometric random variables used in Theorem~\ref{thm:explicit-exact} gives
\begin{equation}
T_{\mathrm{stop}}=O\left(2^r\left(n+\log\frac1\delta\right)\right)
\end{equation}
with probability at least $1-\delta/4$. Each round uses two copies.
Let $C$ denote the Clifford circuit obtained at the end of this procedure and write $\ket{\varphi}=C\ket{\psi}$.
On the event above, we have $\ell=n-r_0$.
At this point, $D(\varphi)$ has the same dimension as $L_\varphi$, and hence all Pauli stabilizers of $\ket{\varphi}$ are $Z$-type, and $\dim D(\varphi)^\perp=n-\ell=r_0$.
Since the computational-basis distribution of $\ket{\varphi}$ is supported on an affine coset of $D(\varphi)^\perp$, the state $\ket{\varphi}$ is supported on an affine subspace of dimension at most $r_0\leq r$.

It remains to learn the state inside this low-dimensional affine subspace.
Take $M+1$ additional computational-basis samples $x_0,x_1,\ldots,x_M$ from $\ket{\varphi}$ and let
\begin{equation}
\label{eq:low-nullity-affine-span}
A=x_0+\operatorname{span}\{x_j+x_0:1\leq j\leq M\}.
\end{equation}
Suppose that at some point $A$ contains probability mass less than $1-\eta$. 
Then the next sample lies outside $A$, and hence increases its affine dimension, with probability larger than $\eta$. 
Since the affine dimension can increase at most $r$ times, the same geometric-variable argument shows that $M=O\left(\frac{r+\log(1/\delta)}{\eta}\right)$ samples suffice to ensure, with probability at least $1-\delta/4$, that $\Pr[x\in A]\geq1-\eta$.

We take $\eta=\varepsilon^2/16$. 
Let $\Pi_A=\sum_{x\in A}\ket{x}\bra{x}$ and define $p=\bra{\varphi}\Pi_A\ket{\varphi}$.
On the event $\Pr[x\in A]\geq1-\eta$, i.e., $p\geq1-\eta$, the normalized projection $\ket{\widetilde{\varphi}}=\frac{\Pi_A\ket{\varphi}}{\sqrt p}$ satisfies
\begin{equation}
\label{eq:low-nullity-projection-error}
\frac12\left\|\ket{\varphi}\!\bra{\varphi}-\ket{\widetilde{\varphi}}\!\bra{\widetilde{\varphi}}\right\|_1=\sqrt{1-p}\leq\frac{\varepsilon}{4}.
\end{equation}
Let $r'=\dim A$. If $r'>r$, the algorithm aborts and outputs a fixed state, for example $(\ket0\bra0)^{\otimes n}$; this occurs only on the compression-failure event. Otherwise, $r'\leq r$. 
Since $A$ is known classically, a Clifford circuit
$U$ can be efficiently constructed such that $UA=\{0\}^{n-r'}\times\F^{r'}$.
It follows that
\begin{equation}
\label{eq:low-nullity-residual-state}
U\ket{\widetilde{\varphi}}=\ket{0}^{\otimes(n-r')}\otimes\ket{\chi}
\end{equation}
for some $r'$-qubit pure state $\ket{\chi}$.

We can now learn the remaining $r'$-qubit state using projected least-squares tomography with random Clifford measurements~\cite{HKP20,GKKT20}.
On each new copy of $\ket{\psi}$, we first apply $C$ and then $U$, measure the first $n-r'$ qubits in the computational basis, and keep the remaining system only when the first $n-r'$ outcomes are all zero.
Conditioned on acceptance, the remaining $r'$ qubits are in the state $\ket{\chi}$, while the acceptance probability is $p\geq1-\eta$.
Conditioned on the complete compression and affine-span recovery transcripts, the accepted residual systems are independent identical copies of $\ket{\chi}$.
Let $d=2^{r'}\leq2^r$. On each accepted copy, we apply an independently and uniformly random $r'$-qubit Clifford and measure in the computational basis.
These measurements are equivalent to sampling from the stabilizer-state projective $2$-design.
By the trace-norm guarantee for projected least-squares tomography with projective $2$-design measurements~\cite[Theorem~1]{GKKT20}, using $O(d\varepsilon^{-2}\log(d/\delta))$ accepted copies produces, with probability at least $1-\delta/4$, a density matrix $\widehat\rho_{\rm log}$ satisfying
$\frac12\|\widehat\rho_{\rm log}-\ket{\chi}\bra{\chi}\|_1\leq\varepsilon/4$.
Here, when $r'=0$, the logical system is trivial and $\widehat\rho_{\rm log}$ is understood to be its unique state.
The estimator can be computed in time polynomial in the number of samples and in $d$~\cite{GKKT20}.
Since $d\leq2^r$, the logical tomography stage therefore uses
\begin{equation}
\label{eq:low-nullity-tomography}
O\left(\frac{d}{\varepsilon^2}\log\frac{d}{\delta}\right)
\leq
O\left(\frac{2^r}{\varepsilon^2}\left(r+\log\frac1\delta\right)\right)
\end{equation}
accepted copies. Finally, since $\eta=\varepsilon^2/16$ and $\varepsilon\leq1$, on the preceding success events the acceptance probability satisfies $p\geq15/16$.
A standard Chernoff bound (see also Ref.~\cite[proof of Theorem~6.4]{GIKL25} for a similar postselection argument) therefore shows that a constant-factor larger number of input copies suffices to obtain the required number of accepted copies except with probability at most $\delta/4$.
Hence the total copy complexity of the tomography stage has the same asymptotic scaling as Eq.~\eqref{eq:low-nullity-tomography}.

The classical output consists of $(C,U,r',\widehat\rho_{\rm log})$, representing
\begin{equation}
 \widehat\rho=C^\dagger U^\dagger\bigl((\ket0\bra0)^{\otimes(n-r')}\otimes\widehat\rho_{\rm log}\bigr)UC.
\end{equation}
The triangle inequality combines the projection error and logical tomography error. A union bound over compression, affine-span recovery, logical tomography, and obtaining enough accepted copies gives an estimate $\widehat\rho$ satisfying $\tfrac12\left\|\widehat{\rho}-\ket{\psi}\!\bra{\psi}\right\|_1\leq \varepsilon$ with probability at least $1-\delta$, using
\begin{equation}
N=O\left(2^r\left(n+\log\frac1\delta\right)+\frac{2^r}{\varepsilon^2}\left(r+\log\frac1\delta\right)\right)
\end{equation}
copies. 
This completes the proof.
\end{proof}

\section{Conclusions and open questions}
In this work, we have comprehensively revisited the problem of learning and testing stabilizer states with optimal sample complexity. We have shown that arbitrary $n$-qubit stabilizer states can be learned exactly and in polynomial time from $\Theta(n)$ copies using adaptive single-copy measurements. Thus, classical adaptivity closes the quadratic gap between non-adaptive single-copy learning~\cite{ABDY23} and two-copy Bell sampling~\cite{Mon17}, without requiring coherent access to multiple copies. At the heart of our approach is an adaptive subroutine that uses measurement outcomes to choose subsequent measurement bases, progressively simplifying the unknown state while preserving the diagonal stabilizer structure already exposed. We have extended this approach to tolerant stabilizer learning and testing in the stated parameter regimes, established an optimal memory--sample tradeoff for stabilizer testing, and obtained efficient single-copy learners for states of low stabilizer nullity.

These results leave several interesting questions about 
the power and limitations of adaptivity.

\begin{itemize}

\item \emph{Additive dependence on stabilizer nullity.}
At constant confidence, Theorem~\ref{thm:low-nullity} learns states of stabilizer nullity at most $r$ using $n2^{O(r)}+2^{O(r)}\eps^{-2}$ single-copy measurements. The factor $2^{O(r)}$ multiplying the linear dependence on $n$ arises because each stabilizer-exposure step waits for two computational-basis outcomes to collide in the same class of the quotient $D^\perp/V$. Is this repeated collision cost intrinsic? In particular, can the multiplicative $n2^{O(r)}$ dependence be replaced by an additive $n+2^{O(r)}$ bound at constant accuracy and confidence? For states of $T$-count at most $t$, such an additive dependence is achievable with two-copy measurements: Bell-sampling-based methods give an $O(n+2^t)$-copy learner at constant accuracy and confidence~\cite{HG24,GIKL25}. On the other hand, one should not expect an information-theoretic $n+2^{\Omega(t)}$ lower bound for the subclass prepared with $t$ single-qubit non-Clifford gates: general bounded-gate-complexity tomography yields a computationally inefficient learner with sample complexity only $\widetilde O(n^2(t+1)/\varepsilon^2)$~\cite{ZLKQHC24}. Thus the interesting question is whether the exponential dependence on $r$ (or $t$) is intrinsic to \emph{computationally efficient single-copy} learning, and, within this model, whether the cost of recovering the extensive stabilizer structure can be separated additively from that of learning the residual non-stabilizer degrees of freedom.

\item \emph{The optimal leading constant.}
For fixed failure probability $0<\delta<1/8$, our algorithm uses $4n+O(\sqrt n)$ copies. In contrast, arbitrary collective measurements achieve $n+O(1)$ copies, with a matching leading-order lower bound~\cite{chang2026sampleoptimallearningstabilizerstates}. Can adaptive single-copy measurements also achieve $n+o(n)$ copies, or does the single-copy restriction impose a strictly larger leading constant? Resolving this question would determine whether adaptivity replaces coherent access to multiple copies only at the level of asymptotic scaling, or also at the level of the optimal leading-order cost.

\item \emph{Fully agnostic stabilizer learning.}
Our tolerant learner assumes a fixed lower bound $f_0>\sqrt{2/3}$ on the stabilizer fidelity. Can this restriction be removed? The goal is to output a stabilizer state $\ket{\widehat S}$ satisfying $\langle\widehat S|\rho|\widehat S\rangle\geq F_{\mathrm{stab}}(\rho)-\xi$ for an arbitrary input state $\rho$. With two-copy measurements, stabilizer bootstrapping achieves this guarantee with runtime
$\operatorname{poly}(n,1/\xi)(1/f)^{O(\log(1/f))}$ when the optimal stabilizer fidelity is at least $f$~\cite{CGYZ25}. Can comparable guarantees be obtained using only adaptive single-copy measurements? A preliminary stabilizer test does not resolve this question: even when the input is far from every stabilizer state, an agnostic learner must still find a competitive stabilizer approximation. The central challenge is therefore to extract useful stabilizer structure without relying on a high-fidelity reference state.

\item \emph{Fermionic analogues.}
Fermionic Gaussian states and their weakly non-Gaussian extensions provide a natural setting in which to investigate related advantages of adaptivity~\cite{pzx6-nkfb,2402.18665}. Can adaptively chosen Gaussian transformations and measurement bases reduce the sample complexity relative to non-adaptive single-copy protocols? For weakly non-Gaussian states, can such updates progressively identify the Gaussian structure while confining the remaining learning problem to a small number of non-Gaussian modes?

\item \emph{When does adaptivity provably help?}
More broadly, can one identify structural conditions under which adaptivity provably reduces the sample complexity of quantum learning tasks? In our algorithm, measurement outcomes do more than provide data about the unknown state: they identify transformations that preserve the structure already exposed while making further progress with constant probability. It would be interesting to understand whether this mechanism---measurement-guided simplification with monotone progress---extends to other structured families of quantum states, and more generally when classical feedback can reproduce information that would otherwise require coherent measurements across multiple copies.

\end{itemize}
\subsection{Acknowledgments}
We thank Srinivasan Arunachalam, Sitan Chen, Arkopal Dutt, Sabee Grewal and Jarrod McClean for discussions about this problem.
L.B., J.E., A.A.M., and L.S.~acknowledge support from the BMFTR (MUNIQC-Atoms, QuSol, Pasquops, Hybrid++, FermiQP), the Munich Quantum Valley, the Quantum Flagship (PasQuanS2, MILLENION), QuantERA (SDPCode), the Clusters of Excellence MATH+ and ML4Q, the DFG (CRC 183 and 
SPP 2514), the Einstein Foundation (Einstein Research Unit on Quantum Devices), Berlin Quantum, and the European Research Council (DebuQC). A.A.M. further acknowledges support from a 2025 Google PhD Fellowship.  W.G.
acknowledges support from NSF Grant
CCF-2430375.

\subsection{AI statement}
The human authors identified that the single-copy lower bound of $\Omega(n^2)$ for learning stabilizer states in Ref.~\cite{ABDY23} was valid only for non-adaptive protocols, and wondered about extending it to adaptive protocols. When GPT-6 came out, we asked GPT-6 whether it could prove an adaptive lower bound of $\Omega(n^2)$ for single-copy stabilizer learning. Instead, it proposed the $\Theta(n)$ adaptive learning algorithm that, after verification and further development by the authors, forms the main result of this work. The human authors subsequently identified stabilizer testing with and without quantum memory, as well as learning beyond stabilizer states, as natural directions to investigate, and prompted GPT-6 Astra to explore these applications. All proofs, arguments, and statements included in the paper were subsequently checked, digested, refined, and rewritten by the human authors.

The main theorems shown here have been formally
verified in Lean~4 using mathlib. OpenAI's Codex assisted with developing,
reviewing, and documenting the formalization.\footnote{The
formalization and reproduction instructions are available at
\url{https://github.com/AntMele/single-copy-stabilizer-learning-lean}.}

\bibliographystyle{unsrtnat}

\appendix

\section{Extension to prime-dimensional qudits}
\label{sec:prime-dimensional-qudits}

The exact learning algorithm extends naturally to qudits of prime local dimension. We briefly introduce the corresponding stabilizer formalism and then explain the modifications to the qubit protocol. In order to proceed, let $p$ be prime, let $\omega=e^{2\pi i/p}$, and let all vector spaces and inner products in this section be over $\mathbb F_p$. On a single qudit, define the Weyl operators by $X(a)\ket{x}=\ket{x+a}$ and $Z(b)\ket{x}=\omega^{bx}\ket{x}$ for $a,b,x\in\mathbb F_p$. For $a,b\in\mathbb F_p^n$, write $P(a,b)=X(a)Z(b)=\bigotimes_{j=1}^n X(a_j)Z(b_j)$. Up to global phases, Pauli operators are therefore labelled by vectors $(a,b)\in\mathbb F_p^{2n}$. Their commutation relations are determined by the symplectic form $[(a,b),(a',b')]=a\cdot b'-b\cdot a'$, so two Paulis commute if and only if this quantity vanishes.

A pure $n$-qudit stabilizer state $\ket{\varphi}$ is specified, up to the corresponding eigenvalue phases, by an $n$-dimensional isotropic subspace $L_\varphi\leq\mathbb F_p^{2n}$. As in the qubit case, define its space of diagonal stabilizer labels by $D(\varphi):=\{b\in\mathbb F_p^n:(0,b)\in L_\varphi\}$ and let $\ell(\varphi):=\dim D(\varphi)$. The computational-basis distribution of $\ket{\varphi}$ is uniform on an affine coset of $D(\varphi)^\perp$, and the projection of the stabilizer space onto its $X$-coordinates satisfies $\pi_X(L_\varphi)=D(\varphi)^\perp$. Hence, if $x,y$ are independent computational-basis measurement outcomes, then their difference $u:=x-y$ is uniform on $D(\varphi)^\perp$, and for every such $u$ there exists some $w\in\mathbb F_p^n$ with $(u,w)\in L_\varphi$.

The adaptive algorithm is the direct analogue of Algorithm~\ref{alg:stab_learn}. Starting with $C=I$, measure two copies of $C\ket{\psi}$ in the computational basis and form $u=x-y\in\mathbb F_p^n$. If $u=0$, no update is made. Otherwise, choose a pivot $i$ with $u_i\neq0$. Generalized controlled-SUM gates, followed by a multiplication Clifford on the pivot, give a Clifford $F_u$ mapping $(u,w)$ to $(e_i,w')$ for some $w'\in\mathbb F_p^n$. Concretely, one may use $\mathrm{SUM}_{i\rightarrow j}(\lambda)\ket{s,t}=\ket{s,t+\lambda s}$ with $\lambda=-u_j/u_i$ to eliminate the non-pivot $X$-components, and then apply the multiplication gate $M_{u_i^{-1}}\ket{s}=\ket{u_i^{-1}s}$ on the pivot.

After this transformation, the local Pauli on the pivot is, up to phase, $XZ^c$ for some unknown $c\in\mathbb F_p$. For each $a\in\mathbb F_p$, fix a one-qudit Clifford $H_a$ satisfying $H_aXZ^aH_a^\dagger\propto Z$. The algorithm chooses $a\in\mathbb F_p$ uniformly at random and applies $H_a$ on the pivot. Thus the correct local correction is selected with probability $1/p$.

Exactly as in the qubit proof, this update never destroys previously exposed diagonal stabilizers. Indeed, since $u\in D^\perp$, after the pivoting circuit every old diagonal stabilizer has zero component on the pivot. The final one-qudit Clifford therefore leaves all of them diagonal. For the unique value of $a$ matching the unknown slope $c$, the stabilizer $(e_i,w')$ is converted into a new diagonal stabilizer acting nontrivially on the pivot, and is consequently linearly independent of the previous diagonal stabilizers. Hence every round preserves or increases $\ell$, and conditioned on $u\neq0$ an increase of at least one occurs with probability at least $1/p$.

\begin{theorem}[Adaptive single-copy learning in prime dimension]
\label{thm:prime-dimensional-learning}
Let $p$ be prime, $n\geq1$, and $0<\delta\leq1/3$. An arbitrary pure stabilizer state of $n$ qudits of local dimension $p$ can be learned exactly, with probability at least $1-\delta$, using
\begin{equation}
T= pn+O_p\left(\sqrt{n\log(1/\delta)}+\log(1/\delta)
\right)
\label{eq:qudit-round-complexity}
\end{equation}
adaptive rounds and hence $N=2T+1$ copies. For every fixed prime $p$, the sample complexity is $\Theta_p(n+\log(1/\delta))$. The algorithm uses only single-copy Clifford operations and computational-basis measurements and is classically efficient.
\end{theorem}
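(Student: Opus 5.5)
The proof follows the template of Theorem~\ref{thm:explicit-exact}, with the qubit facts replaced by their $\mathbb F_p$ analogues. It has three parts: a per-round monotonicity and progress statement for the codimension $r:=n-\ell$, a stochastic-domination argument by a sum of independent geometric variables, and a final readout together with a lower bound.

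\textbf{Step 1: per-round structure.} Let $\ket\varphi=C\ket\psi$. The computational-basis distribution of $\ket\varphi$ is uniform on an affine coset $z+D(\varphi)^\perp$. So for independent outcomes $x,y$, the difference $u=x-y$ is uniform on $D(\varphi)^\perp$, and $\Pr[u\neq0]=1-p^{-r}$. I would verify $\pi_X(L_\varphi)=D(\varphi)^\perp$ by a dimension count. The kernel of $\pi_X$ restricted to $L_\varphi$ is $\{0\}\times D(\varphi)$, so $\dim\pi_X(L_\varphi)=n-\ell$. Isotropy gives $\langle u,s\rangle=0$ for all $u\in\pi_X(L_\varphi)$ and $s\in D(\varphi)$, so $\pi_X(L_\varphi)\subseteq D(\varphi)^\perp$, and equal dimensions give equality.

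Next I would compute the action of $F_u$ on diagonal labels. A gate $\mathrm{SUM}_{i\to j}(\lambda)$ maps $Z$-labels by $b_i\mapsto b_i-\lambda b_j$. Hence after all the SUM gates, $(0,s)$ becomes $(0,s')$ with
\[
s'_i=s_i+u_i^{-1}\sum_{j\neq i}u_js_j=u_i^{-1}\langle u,s\rangle=0.
\]
The multiplication gate $M_{u_i^{-1}}$ only rescales the $i$-th $Z$-coordinate, so $s'_i=0$ is preserved. Since every old diagonal stabilizer then acts trivially on the pivot, $H_a$ leaves it unchanged, and $\ell$ never decreases.

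For the correct slope $a=c$, the stabilizer $(e_i,w')$ is mapped to a diagonal label whose pivot component is nonzero. This label is therefore independent of the old $D$, and $\ell$ increases. By the argument of the footnote in the qubit proof, it increases by exactly one.

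\textbf{Step 2: tail bound.} When the codimension is $r>0$, a round decreases it with probability
\[
p_r=\frac{1-p^{-r}}{p}.
\]
As before, the stopping time is dominated by $T_{\mathrm{stop}}=\sum_{r=1}^nG_r$ with independent $G_r\sim\operatorname{Geom}(p_r)$. Its mean is
\[
\E[T_{\mathrm{stop}}]=\sum_{r=1}^n\frac{p}{1-p^{-r}}\leq pn+p\sum_{r\geq1}\frac{p^{-r}}{1-p^{-r}}=pn+O_p(1).
\]
Every $p_r$ is at least $p_*:=(1-1/p)/p$, so Janson's upper-tail bound~\cite[Theorem~2.1]{Jan18} applies as in Theorem~\ref{thm:explicit-exact}. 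It yields $T_{\mathrm{stop}}\leq pn+O_p\bigl(\sqrt{n\log(1/\delta)}+\log(1/\delta)\bigr)$ with probability at least $1-\delta$, where the implied constants depend on $p_*$. Once $r=0$, $C\ket\psi=\ket z$ and all subsequent differences vanish, so one more measurement returns $z$. The copy count is $N=2T+1$. Classical efficiency comes from tracking $C$ as an $O(nT)$-gate circuit and updating the tableau in polynomial time.

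\textbf{Step 3: lower bound and the main obstacle.} For the lower bound $\Omega_p(n+\log(1/\delta))$, I would use Holevo/counting: there are $p^{\Omega(n^2)}$ stabilizer states, but only the term $\Omega(n)$ is needed. I would adapt the qubit argument of~\cite{Mon17,chang2026sampleoptimallearningstabilizerstates}, for instance by restricting to computational-basis-like families, or by bounding the information per copy by $n\log p$ while there are $p^{\Theta(n^2)}$ states. The $\log(1/\delta)$ term comes from distinguishing two fixed states with overlap bounded away from zero.

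The main obstacle is making sure that each needed Clifford $H_a$ exists and that the slope $c$ is well defined up to phase. I plan to handle this with Clifford surjectivity onto $\mathrm{Sp}(2,\mathbb F_p)$: for each $a$, there is a symplectic matrix sending $(1,a)$ to $(0,1)$.
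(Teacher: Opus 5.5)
Your proposal is correct and follows the paper's own proof. Both arguments use the same three ingredients: monotonicity of $\ell$ because old diagonal labels vanish on the pivot after $F_u$, a per-round success probability of $(1-p^{-r})/p$, and domination by a sum of independent geometric variables with mean $pn+O_p(1)$ concentrated via Janson's bound. You also fill in details the paper only asserts: the dimension count giving $\pi_X(L_\varphi)=D(\varphi)^\perp$, the explicit SUM-gate action on $Z$-labels, existence of $H_a$ via $\mathrm{SL}(2,\mathbb F_p)$, and a Holevo--Fano counting argument for the $\Omega(n)$ lower bound, which the appendix states without proof. Of your two suggested lower-bound routes, only the counting one works. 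Computational-basis families give nothing, since such states are identified from a single copy.
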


\begin{proof}
Let $r:=n-\ell$ denote the number of diagonal stabilizer generators still missing. Since $u$ is uniform on the $r$-dimensional space $D^\perp$, one has $\Pr[u\neq0]=1-p^{-r}$. Conditioned on $u\neq0$, the correct local correction is selected with probability $1/p$. Therefore, conditional on the complete history, a round decreases $r$ by at least one with probability at least $q_r=(1-p^{-r})/p$, and never increases it.

The remainder is identical to the geometric waiting-time analysis in Theorem~\ref{thm:explicit-exact}. In particular, the expected stopping time satisfies $\mathbb E[T_{\rm stop}]\leq\sum_{r=1}^n q_r^{-1}=pn+p\sum_{r=1}^n(p^r-1)^{-1}=pn+O_p(1)$, while $q_r\geq(p-1)/p^2$ for every $r\geq1$. The same concentration bound for sums of geometric random variables therefore gives Eq.~\eqref{eq:qudit-round-complexity}.

Once $r=0$, one has $D=\mathbb F_p^n$ and hence $D^\perp=\{0\}$, so all subsequent difference samples vanish. The transformed state is then a computational-basis state; one additional measurement determines its basis label, and inversion of the accumulated Clifford reconstructs the original stabilizer state.

\end{proof}

For $p=2$, the unknown slope has two possible values and the success probability $1/p$ reduces to $1/2$, recovering the qubit protocol and its leading round count $2n$. 

\end{document}